\newcommand{\norm}[1]{\left\lVert#1\right\rVert}
\newtheorem{theorem}{\indent\sc Theorem}
\newtheorem{proposition}{\indent\sc Proposition}
\DeclareMathOperator*{\argmin}{arg\,min}
\begin{document}

\bibliographystyle{unsrt}

\title{Bayesian and Variational Bayesian approaches for flows in heterogenous random media}
\author[fdu]{Keren Yang} 
\author[tamu_math,tamu_stat]{Nilabja Guha \corref{cor} }
\ead{nguha@math.tamu.edu}
\author[tamu_math]{Yalchin Efendiev} 
\author[tamu_stat]{Bani K. Mallick} 
\address[fdu]{School of Mathematical Sciences, Fudan University, Shanghai 200433, China}
\address[tamu_math]{Department of Mathematics, Texas A\&M University, College Station, TX 77843, USA}
\address[tamu_stat]{Department of Statistics, Texas A\&M University, College Station, TX 77843, USA}
\cortext[cor]{Corresponding author}


\begin{abstract}
In this paper, we study porous media flows in heterogeneous stochastic media.
We propose an efficient forward simulation technique that is tailored for
variational Bayesian inversion. As a starting point,
the proposed forward simulation
technique decomposes the solution into the sum of separable
functions (with respect to randomness and the space), where
each term is calculated based on a variational approach. This is similar to
  Proper Generalized Decomposition (PGD). Next, we apply a multiscale
technique to solve for each term (as in \cite{efendiev2013generalized})
and, further, decompose the random function
into 1D fields. As a result, our proposed method provides  an approximation
 hierarchy
for the solution as we increase the number of terms
in the expansion and, also, increase the spatial resolution of each term.
We use the hierarchical solution distributions in a variational Bayesian approximation
to perform uncertainty quantification in the inverse problem.
We conduct a detailed numerical study to explore the performance of the proposed uncertainty quantification technique and show the theoretical posterior concentration.
\end{abstract}
\maketitle

\section{Introduction}

In a physical system governed by differential equations, studying the uncertainty of the underlying system is of great interest.
Given the observation from the system  (possibly contaminated with errors)
inferencing  on the underlying parameter and its uncertainty constitutes
the uncertainty quantification of the inverse  problem \cite{kaipio2006statistical, kaipio2007statistical, tarantola2005inverse}. Bayesian methodology provides a natural framework for such problems, through specification of a prior distribution on the underlying parameter and the known likelihood function \cite{calvetti2007introduction, stuart2010inverse, mondal2010bayesian, xun2013parameter}. The Bayesian inference uses  Markov chain Monte Carlo (MCMC) or related methodologies. For MCMC, at  each sampling step,  for each proposed value of the parameter,  we need to solve the underlying equation i.e., the forward problem. The forward solution may not be closed form and we need to apply a numerical technique such as finite element or finite difference methods for that purpose, which can be computationally expensive.  {\it The objective of this paper} is to combine an efficient forward solution technique with an uncertainty quantification technique in an inverse problem. For that purpose, we use a separation of variable based model reduction technique in combination with a variational Bayesian approach for the inverse problem.

Our current development lies at the interface of
forward and inverse problems.  Estimating  subsurface properties
plays an
important role in many porous media applications, such as reservoir
characterization, groundwater modeling, {vadose} zone characterization, and so
on.  In this paper,
 we consider a model problem of
 reconstructing permeability field $\kappa(x)$
in the following single-phase flow equation
\begin{align}
  -\nabla\cdot(\kappa(x,\mu)\nabla u(x,\mu)) &= f \textrm{ in } \Omega_x, \nonumber \\
  u|_{\partial \Omega_x} &= 0,
\label{model}
\end{align}
{for given observations $y$ on pressure field $u$, where $\mu$ is the parameters associated with the permeability field $\kappa$, and $f$ is the known force term.}

Uncertainty quantification in an inverse problem can be a daunting task, particularly, for multiscale problems due to scale disparity. These problems require efficient forward simulation techniques that can reduce the degrees of freedom in a systematic way.

The first step in the estimation involves a parameterization
of the media properties using some prior information.
Typical approaches include
Karhunen-Lo\`{e}ve type expansion (KLE),
where the underlying random field is assumed to be Gaussian.
More complex parameterization
involves channelized fields \cite{mondal2014bayesian, iglesias2014well}.
These prior distributions can contain additional random parameters, which
require additional prior information and may involve techniques
like reversible jump MCMC \cite{mondal2014bayesian}.

For a forward solution technique used in a stochastic inversion,
we explore approaches that are based on separation of variables,
where we separate uncertainties and the spatial variables.
These separation
 approaches differ from Monte Carlo methods. They
 allow a fast computation of the solution space
over
the entire parameter range
 at a cost of computing each term
in the separable expansion.
In many cases, these separable expansions converge very fast and requires
only a few terms.
In our approaches, the solution is sought as a linear
combination of separable functions
(see e.g., \cite{ammar2006new, ammar2007new, le2009results} and applications to
inverse problems in
\cite{signorini2016proper, berger2016proper}).

In this paper, we consider an approach
proposed
in \cite{gao2016application}  with an overarching goal to do Bayesian inversion.
We show that these
approaches based on separation of variables
can be effectively used
within a variational Bayesian framework
and, in return, variational Bayesian approaches provide an effective
framework for applying separable solution approximation in inverse
problems.

Variational Bayesian techniques \cite{beal2003variational} offer an efficient method of posterior computation, where we approximate the posterior distribution by a closed form recursive estimate, without going into MCMC simulation.
 Variational Bayesian methods provide fast deterministic posterior computation and can be helpful for uncertainty quantification in inverse problem (cf., \cite{jin2010hierarchical, guha2015variational}).
In variational Bayesian methods, we assume that the  conditional posterior of different parts of the parameter are independent. To estimate each such part,   the full separability assumption {, i.e.}, the separability of the parametric and spatial part and separation of the parametric part over dimension is necessary. Therefore, the full separability gives us a natural framework for a variational algorithm.

For the prior distribution, we use KLE to parametrize the log permeability field
assuming Gaussian distribution. For coefficients representing
the separable solution form, we also use Gaussian process priors.

Further, we decompose the posterior solution into conditional independent parts and derive   a more efficient  variational solution. We show the posterior consistency of the proposed method that is the estimated value of the spatial field should be close to the true field under some appropriate metric if some  conditions hold.  We also show that the forward solution converges to the true solution.

We summarize the main contributions of this work as:
1) Proposing a fully separable representation for forward and inverse problem for the flow equation {\eqref{model}};
2) Proposing a fast UQ technique based on the separable representation;
3) Proposing a novel variational method, based on full separability and achieving further computational efficiency;
and
4) { Showing the convergence of both the forward solution and the posterior distribution under the proposed method.}

In the next section, we provide a detailed description of the model and data generating mechanism and a brief outline of the paper. First, we solve the forward problem under full separability and then, based on the forward solution,  we proceed to the inverse problem and uncertainty quantification.

\section{Outline of the paper}

We consider the flow equation \eqref{model}. If  $\kappa$ is given, solving $u$ is called a forward problem.
{ Given $u$, finding corresponding $\kappa$ constitutes the corresponding inverse problem. Given observed data $y$, we would like to estimate the underlying field $\kappa$. Here, $y$ is the quantity associated with \eqref{model} such as fractional flow or pressure data.} We further assume an error $\epsilon_f$ in the observation in the following form,
\begin{align}
y &= F(\kappa)+\epsilon_f, \nonumber \\
\epsilon_f& \sim {{\mathcal{N}}}(0,\sigma^2_f).
\label{ermodel}
\end{align}
Let $\pi(\kappa)$ be the prior on $\kappa$ and the posterior $\Pi$ is therefore
\begin{equation}
\Pi(\kappa|y) \propto \underbrace{p(y|\kappa)}_{\text{Likelihood}}\underbrace{\pi(\kappa)}_{\text{Prior}}.
\label{posterior}
\end{equation}
The likelihood computation part {in each iteration} involves a forward solve and therefore can be computationally expensive. The following separable representation can allow a fast likelihood computation.

Separating the spatial and parametric part,  we  can write, $u = \sum^{N_1}_{i=1} a_i(\mu)v_i(x)$, where $a_i(\mu)$ depends on the parametric part of $\kappa$, and $v_i(x)$ depends on the spatial variable (\cite{gao2016application}). If $\kappa$ is parameterized by finite dimensional $\mu=\{\theta_1,\dots,\theta_{N_2}\}$, we impose another separability condition $a_i(\mu)=\prod_{j=1}^{N_2}a_{i,j}(\theta_j)$.  With this complete separable representation, we can avoid estimating $a_i(\mu)$ in higher dimension and instead estimate low dimensional function $a_{i,j}$. Hence, we have  the following  fully separable approximation of $u$,
 \begin{equation}
    \tilde{u} = \sum^{N_1}_{i=1} \left(\prod^{N_2}_{j=1} a_{i,j}(\theta_j)\right)v_i(x).
\label{sep1}
\end{equation}

Under \eqref{sep1}, {the solution of $u$ can be expressed as a function depending on parameter $\mu\in\Omega_\mu$.} To solve the inverse problem and quantify the uncertainty, we explicitly solve the $a_{i,j}$'s over a grid of $\theta_j$'s and using a Gaussian process prior on the $a_{i,j}$s,  the posterior distribution of $\theta_j$'s can be computed. For  the parametrization of  $\kappa$, we use KL expansion with $N_2$ terms, which we discuss later. Before describing the priors and the complete hierarchical model for the inverse problem,  the solution for $a_{i,j}$'s for the forward problem is given in the next section.

In Section \ref{sec:forward}, we present a forward solution approach. In Section \ref{sec:inverse}, we write down the priors and hierarchical models for the inverse problem. In Section \ref{sec:post}, we describe the MCMC and variational steps explicitly. Later, we introduce the notion of  the consistency in Section \ref{sec:convergence}. In Section \ref{sec:simulation}, we conduct numerical studies. {We present a simple flowchart in Figure \ref{fig:fc} to help understanding the outline of the paper. }

\begin{figure}
    \centering
    \includegraphics[height=3in]{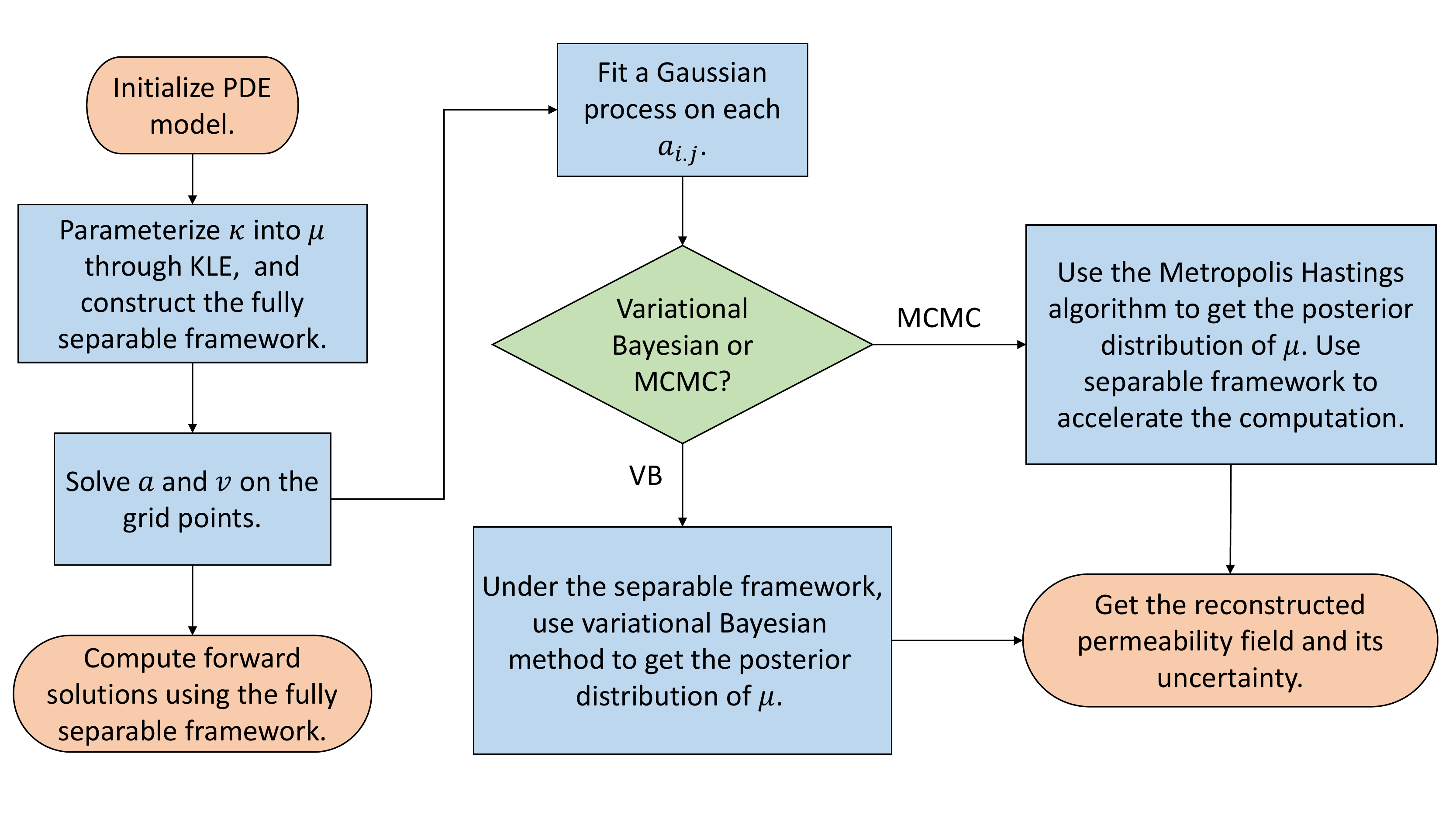}
    \caption{A simple flowchart showing the outline of the paper.}
    \label{fig:fc}
\end{figure}

\section{Forward Solution in Separable Method}
\label{sec:forward}

For PDE model \eqref{model} under the boundary condition a solution can be achieved by solving the variational form,
\[u=\argmin_{u^*} \int_{\Omega_\mu}\int_{\Omega_x}(\frac{1}{2}\kappa|{\nabla} u^*|^2-fv)dxd\mu.\]
Using this form, we can solve for $a_i$ and $v_i$ numerically over a grid in $\Omega_{\mu}$.  When the dimension of  $\Omega_{\mu}$ is bigger then number of required grid over the product space increases exponentially.

To tackle this problem, we further separate the parametric coefficient part as a product of coefficient functions for each individual parameter.  Next, we formulate the fully separable method. Inspired by the idea of Proper Generalized Decomposition, we can introduce following approximation of $u$ in \eqref{sep1}. Let $a=\{a_{i,j}(\cdot)\},\ v=\{v_i(\cdot)\}$, now we try to solve $a$ and $v$ under the variational form.

\subsection{Solving $a$ and $v$}

For derivation of $a_{i,j}$ and $v_{i}$, let us consider the first term, which contains $a_{1,j}$ and $v_{1}$. Using pure greedy optimization, we want to find $a_{1,j}$ and $v_{1}$ that satisfy
\begin{equation*}
 (a_{1,j},v_1) = \argmin_{a_{i,j},v_1}\int_{\Omega_x}\int_{\Omega_\mu}\left(\frac{1}{2}\kappa \left|\nabla \prod^{N_2}_{j=1} a_{1,j}v_1 \right|^2-f \prod^{N_2}_{j=1} a_{1,j}v_1 \right)d\mu dx.
\end{equation*}
Set $a_1 = \prod^{N_2}_{j=1} a_{1,j}$. Then, $v_1$ solves the following equation
\begin{equation}\label{vi}
  -\nabla\cdot\left(\int_{\Omega_\mu}\kappa \prod^{N_2}_{j=1} a_{1,j}^2 d\mu \  \nabla v_1 \right) dx = \int_{\Omega_\mu}f \prod^{N_2}_{j=1} a_{1,j} \ d\mu.
\end{equation}
Similarly, for $a_{1,j}$, we have:
\begin{equation}\label{aij}
  a_{1,j} = \frac{ \int_{\Omega_x} \int_{\Omega_{\mu}\setminus \Omega_j}  f\prod_{k\neq j} a_{1,k}v_1 d\mu' dx }
  {\int_{\Omega_x} \int_{\Omega_{\mu}\setminus \Omega_j} \kappa\prod_{k\neq j} a_{1,k}^2 d\mu' \ |\nabla v_1|^2 dx},
\end{equation}
where $\theta_j\in\Omega_j,\ \Omega_\mu = \Omega_1\bigoplus\Omega_2\bigoplus \cdots\bigoplus\Omega_{N_2}$ and $\mu'\in\Omega_{\mu}\setminus \Omega_j.$
In the previous derivation, we notice that we need to calculate the integration on $\Omega_\mu$. Although $\Omega_\mu$ itself is a high dimensional space, we could compute the integration simply by separating it into $N_2$ subspaces:
\begin{align*}
  \int_{\Omega_\mu}\kappa \prod^{N_2}_{j=1} a_{1,j}^2 d\mu &=p \int_{\Omega_1}\cdots\int_{\Omega_{N_2}} e^{\sum^{N_2}_{j}\theta_{j}\phi_{j}}\prod^{N_2}_{j=1} a_{1,j}^2 d\theta_1\cdots d\theta_{N_2} \\
   &= \prod^{N_2}_{j=1} \int_{\Omega_j}e^{\theta_j\phi_j}a^2_{1,j}d\theta_j
\end{align*}
and
\begin{equation*}
  \int_{\Omega_{\mu}\setminus \Omega_j} \kappa\prod_{k\neq j} a_{1,k}^2 d\mu' = \prod_{k\neq j} \int_{\Omega_k}e^{\theta_k\phi_k}a^2_{1,k}d\theta_k.
\end{equation*}
Here, we use $\kappa=e^{\sum_{j=1}^{N_2}\theta_j\phi_j}$, the form we have for KLE parametrization. We discuss the parametrization in details in next section.

Once  $a_{1,j}$ and $v_1$ are computed, we update $f$ by
\begin{equation*}
f_1 = f+\nabla\cdot(\kappa\nabla\prod^{N_2}_{j=1} a_{1,j}v_1).
\end{equation*}
We state the whole iteration in Algorithm 1.
\begin{algorithm}
\caption{Find $a_{i,j}$ and $v_i$.}
\begin{algorithmic}[1]
\For {$i=1,2,\ldots,N_1$}
\State $a_{i,j}=1, v_i = 1.$
\State $\delta a_i = 1, \delta v_i = 1.$
\While {$\delta a_i>tol_a$ and $\delta v_i>tol_v$}
\State Let $a'_{i,j}=a_{i,j}, v'_i = v_i.$
\State Solve equation \eqref{vi} for $v_i$.
\For {$j=1,2,\ldots,N_2$}
\State Using equation \eqref{aij} to update $a_{i,j}$
\EndFor
\State Update $\delta v_i = ||v_i-v'_i||^2_2 $ and $\delta a_i = \sum_{j}||a_{i,j}-a'_{i,j}||^2$
\EndWhile
\State Update $f = f+\nabla\cdot(\kappa\nabla\prod^{N_2}_{j=1} a_{i,j}v_i)$
\If {$\|f\|<tol_f$}
\State Stop for loop.
\EndIf
\EndFor
\end{algorithmic}
\end{algorithm}

\subsection{GMsFEM framework for forward problem}

In our calculations, we need to solve
the PDE for different permeability fields multiple times.
For this reason, we use the GMsFEM to reduce computational cost.
We use the offline-online procedure to construct GMsFEM coarse spaces.
In the offline stage, we first generate the snapshot spaces based on
different parameters, then get a low dimensional offline space via
model reduction. In the online stage, we use the previous offline basis
function to solve a coarse-grid problem for given $\kappa$.
For details, we refer to \cite{efendiev2013generalized, gao2016application,chung2016adaptive,efendiev2014multilevel,efendiev2015multiscale}.
Note that the GMsFEM becomes particularly important when there are multiple
scales and the problem can be reduced over each coarse block.

\section{Priors and the inverse problem formulation}
\label{sec:inverse}

Given observations on $y$, to estimate and to quantify the uncertainty of $\kappa$ parameterized by {$\mu=\{\theta_1,\dots,\theta_{N_2}\}$}, we specify the priors and the steps of MCMC. We can solve $a_{i,j}$  's over a grid of points and to capture the approximate nature of the representation \eqref{sep1}, we fit a Gaussian process over different values of $\mu$. Under the separation
of $a_i$'s over $\theta_i$'s which results
to a product of Gaussian processes.

\subsection{Fitting a Gaussian Process on function $a_{i,j}$}

Suppose we want to estimate the value of $a_{i,j}$ based on $n_{grid}$  current grid points.
For the function $a_{i,j}(\theta_j)$, from the numerical solution we have the function value $z^{(j)} = (z^{(j)}_1,z^{(j)}_2,\cdots,z^{(j)}_{n_{grid}})$ on $\alpha = (\alpha_1,\alpha_2,\cdots,\alpha_{n_{grid}})$ which are the grids of $\theta_j$. We assume $z_i$ are observed with Gaussian white noise ${\mathcal{N}}(0,\sigma_1^2)$. We set the
covariance matrix $K$ using squared exponential covariance:
\begin{equation*}
  K(\alpha^*,\alpha^{**})=\sigma_a^2\exp\left[-\frac{1}{2}\left (\frac{\alpha^*-\alpha^{**}}{\lambda}\right)^2\right]
\end{equation*}
and
\begin{equation}
a_{i,j}(\cdot)\sim \mathscr{GP}(0,K),
\label{gp_prior}
\end{equation}
where $\lambda$ is the length scale and $\sigma_a$ is the variance.
Then, for a new single input $\alpha_*$, define $\mu_*$ as the mean estimation of $a_{i,j}(\alpha_*)$, and $\sigma_*^2$ as the variance estimation of $a_{i,j}(\alpha_*)$, we have $a_{i,j}(\alpha_*)\sim\mathcal{N}(\mu_*,\sigma_*)$ and
\begin{align}
  \mu_* &= K_{*n_{grid}}(K_{n_{grid}}+\sigma_1^2I)^{-1}z^{(j)} \nonumber \\
  \sigma_*^2 &= K_* - K_{*n_{grid}}p(K_{n_{grid}}+\sigma_1^2I)^{-1}K_{*n_{grid}}^\top+\sigma_1^2 \label{gp},
\end{align}
where $K_{*n_{grid}}$ is the covariance vector between $\alpha_*$ and $(\alpha_1,\alpha_2,\cdots,\alpha_{n_{grid}})$, $K_{n_{grid}}$ is the covariance matrix of vector $(\alpha_1,\alpha_2,\cdots,\alpha_{n_{grid}})$, and $K_* = \sigma_a^2$.
The computational cost for each input point is $\mathcal{O}(N)$ for mean and
$\mathcal{O}(N^3)$ for the variance, with $N=n_{grid}$.

\subsection{Permeability parametrization}

To obtain a permeability field in terms of an optimal $L^2$ basis, we use the KLE.
Considering the random field $\mathscr{Y}( {x},\tilde{\omega})=\log[k( {x},\tilde{\omega})]$, where
$\tilde{\omega}$ represents randomness. We assume a zero mean $E[\mathscr{Y}( {x},\tilde{\omega})]=0$, with a known covariance
operator $R({x},{y})=E\left[\mathscr{Y}( {x})\mathscr{Y}( {y})\right]$. Then,  $\mathscr{Y}(x,\tilde{\omega})$ has following Reproducing Kernel Hilbert Space
(RKHS) representation
\[
\mathscr{Y}( {x},\tilde{\omega})=\sum_{k=1}^{\infty} \mathscr{Y}_k(\tilde{\omega}) \Phi_k( {x}),\qquad
\]
with
\begin{equation*}
\mathscr{Y}_k(\tilde{\omega})=\int_\Omega\mathscr{ Y}(x,\tilde{\omega})\Phi_k(x)dx.
\end{equation*}
The functions $\{\Phi_k(x)\}$ are eigenvectors of the covariance operator $R(x,y)$, and form
a complete orthonormal basis in $L^2(\Omega)$,that is
\begin{equation}\label{eig}
\int_{\Omega} R( {x}, {y})\Phi_k( {y})d {y}=\lambda_k\Phi_k( {x}),
\qquad k=1,2,\ldots,
\end{equation}
where $\lambda_k=E[\mathscr{Y}_k^2]>0$, $E[\mathscr{Y}_i\mathscr{Y}_j]=0$ for
all $i\neq j$. Let $\theta_k=\mathscr{Y}_k/\sqrt{\lambda_k}$. Hence, $E[\theta_k]=0$ and $E[\theta_i\theta_j]=\delta_{ij}$ and  we have
\begin{equation}\label{KLE}
\mathscr{Y}( {x},\tilde{\omega})=\sum_{k=1}^{\infty} \sqrt{\lambda_k}\theta_k(\tilde{\omega})\Phi_k( {x}),
\end{equation}
where $\Phi_k$ and $\lambda_k$ satisfy \eqref{eig}.  We truncate the KLE \eqref{KLE} to a finite number of terms and keep the leading-order terms (quantified by the magnitude
of $\lambda_k$), and  capture most of the energy of the stochastic process $\mathscr{Y}(x,
\tilde{\omega})$. For an $N$-term KLE approximation
\begin{equation*}
\mathscr{Y}^N=\sum_{k=1}^{N}\sqrt{\lambda_k}\theta_k\Phi_k,
\end{equation*}
the energy ratio of the approximation is defined by
\begin{equation*}
e(N):=\frac{E\|\mathscr{Y}^N\|^2}{E\|\mathscr{Y}\|^2}=\frac{\sum_{k=1}^N\lambda_k}{\sum_{k=1}^{\infty}\lambda_k}.
\end{equation*}
If the eigenvalues $\{\lambda_k\}$ decay very fast, then the truncated KLE with the
first few terms would be a good approximation of the stochastic process $\mathscr{Y}(x,\omega)$ in the $L^2$ sense.
In our simulations, we use the prior $\theta_k \sim {\mathcal{N}}(0,1)$.

\subsection{Hierarchical Model}
From the likelihood equation \eqref{ermodel} and the priors in  \eqref{KLE}
and \eqref{gp_prior}, the full hierarchical model  can be written. Writing $\tilde{u} = \sum^{N_1}_{i=1} \left(\prod^{N_2}_{j=1} a_{i,j}(\theta_j)\right)v_i(x)$, as in the fully separable representation given in \eqref{sep1}, we have
\begin{align}
y &= F(\kappa)+\epsilon_f, \nonumber \\
\epsilon_f&\sim {\mathcal{N}}(0,\sigma^2_f), \nonumber \\
a_{i,j}(\cdot)&\sim \mathscr{GP}(0,K), \nonumber \\
log(\kappa( {x},\tilde{\omega}))&=\sum_{k=1}^{\infty} \sqrt{\lambda_k}\theta_k(\tilde{\omega})\Phi_k( {x}),\nonumber \\
\theta_k &\sim {\mathcal{N}}(0,1).
\label{hr_model}
\end{align}

\section{Posterior Calculation}
\label{sec:post}

Suppose we have $M$ observations $y_1,y_2,\cdots,y_M \in R$ of $u$ at the locations $x_1,x_2,\cdots,x_M\in \Omega_x$. Then, we define vector $V_i = \left(v_i(x_1),v_i(x_2),\cdots,v_i(x_M)\right)^\intercal$, where $v_i$ is the solution of \eqref{vi}, and let $y=(y_1,y_2,\cdots,y_M)^\intercal$. We also assume that each observation contains Gaussian white noise ${\mathcal{N}}(0,\sigma_{y}^2)$.
Based on the separated method and Bayes framework the posterior is given by:
\begin{equation}
  \Pi(\mu,a|y)\varpropto p(y|a)\cdot \pi(a|\mu) \cdot \pi(\mu).
  \label{post_den}
\end{equation}

\subsection{Variational Bayesian Method}

We present a brief overview of the variational Bayesian method. For the observation  $Y$,  parameter ${\bf \Theta}$ and prior  $\Pi({\bf \Theta})$ on it,  let the  joint distribution and posterior be   ${p(Y,\Theta)}$ and  ${\Pi({\bf\Theta}|Y)}$, respectively. Then
\begin{equation*}
\log p(Y) = \int \log\frac{p(Y,\Theta)}{q({\bf \Theta})} q({\bf \Theta}) d({\bf \Theta})+KL(q({\bf \Theta}),\Pi({\bf\Theta}|Y)),
\end{equation*}
for any density $ q({\bf \Theta})$. Here $KL(p,q)=E_p(\log\frac{p}{q})$, the Kullback-Leibler distance between $p$ and $q$.
Thus,
\begin{equation}
\label{vrkl}
 KL(Q({\bf \Theta}),p(Y,{\bf\Theta})) =   KL(Q({\bf \Theta}),\Pi({\bf\Theta}|Y))-\log p(Y) .
\end{equation}
Given $Y$, we minimize  $ KL(Q({\bf \Theta}),p(Y,{\bf\Theta}))$ under separability.
Minimization of the L.H.S of \eqref{vrkl} analytically may not be possible in general and therefore, to simplify the problem, it is assumed that the parts of ${\bf \Theta}$ are conditionally independent given $Y$. That is
\[ Q({\bf \Theta})=\prod_{i=1}^s q(\eta_i)\]
and $\cup_{i=1}^s \eta_i={\bf \Theta}$ is a partition of the set of parameters ${\bf \Theta}$. Minimizing under the separability assumption, an approximation of the posterior distribution is computed.   Under this assumption of minimizing L.H.S of \eqref{vrkl} with respect to $q(\eta_i)$, and  keeping the other $q( \eta_j ), j\neq i$ fixed, we develop the  following mean field approximation equation:

\begin{equation}
q(\eta_i) \propto \exp( E_{-i}\log(p(Y,{ \bf \Theta})),
\label{vrupdate}
\end{equation}
where the subscript $E_{-i}$ stands for expectation with respect to variables other than those included in  $\eta_i$.
The variational solution is exact and converges to the KL minimizer    in the proposed class rapidly.

Finally for our case, the posterior distribution of parameter $\mu$ and $a$ given observations $y$ is approximated by a variational distribution:
\begin{equation*}
  P(\mu,a|y)=Q(\mu,a).
\end{equation*}
For a fixed index $j$, $\theta_j$ and $a_{i,j}$ can
 be highly correlated due to Gaussian process fitting.
We have the following from equation \eqref{gp}
\begin{align*}
  E(a_{i,j}) &= K_{\theta_jn}(K_n+\sigma_1^2I)^{-1}z^{(j)} \\
  Var(a_{i,j}) &= K_{\theta_j} - K_{_{\theta_j}n}(K_n+\sigma_1^2I)^{-1}K_{\theta_jn}^\intercal+\sigma_1^2.
\end{align*}
We group these variables based on index $j$:
\begin{equation*}
  \mathcal{G}_j = \{\theta_j,a_{1,j},a_{2,j},\cdots,a_{N_1,j}\}  \text{ for } j=1,2,\cdots,N_2
\end{equation*}
and then we factorize the variational distribution into these groups,
\begin{equation}
  Q(\mu,a)= \prod_{j=1}^{N_2} q(\theta_j,a_{1,j},\cdots,a_{N_1,j}).
\end{equation}
Minimizing the Kullback-Leibler divergence of $P$ from $Q$, the best approximation of each $q_j$ is
\begin{equation}\label{VBbasic}
  \ln q(\mathcal{G}_j)=E_{\theta_k,a_{i,k},k\neq j}\left[\ln P(\mu,a,y)\right]+\text{constant}.
\end{equation}
In the following part we write $E_{\theta_k,a_{i,k},k\neq j}$ as $E_{-j}$ in abbreviation.

Next, our goal is to give
a complete form of Equation \eqref{VBbasic}.
For this, we suppose we have $M$ observations $y_1,y_2,\cdots,y_M \in R$ on $x_1,x_2,\cdots,x_M\in \Omega_x$, then we shrink $v_i(x)$ into an $M$ dimension vector
\begin{equation*}
  V_i = \left(v_i(x_1),v_i(x_2),\cdots,v_i(x_M)\right)^\intercal
\end{equation*}
and let $y=(y_1,y_2,\cdots,y_M)^\intercal$. We also assume that each observation contains Gaussian white noise ${\mathcal{N}}(0,\sigma_{y}^2)$.

Furthermore, we use Gaussian distribution as prior distribution for $\theta_j$,
\begin{equation*}
  \theta_j \sim {\mathcal{N}}(\theta_0,\sigma_0^2).
\end{equation*}
Then, we have
\begin{align*}
  \ln q(\mathcal{G}_k) = {} & E_{-k}\left[\ln P(\mu,a,y)\right]+\text{constant} \\
   =  {} & E_{-k}\left[\ln P(y|a)+\ln P(a|\mu)+\ln P(\mu)\right]+\text{constant} \\
   =  {} & E_{-k}\left[-\frac{\norm{y-\sum^{N_1}_{i=1} \left(\prod^{N_2}_{j=1} a_{i,j}\right)V_i}_2^2}{2\sigma_y^2}\right]  \\
    & + E_{-k}\left[\sum_{i=1}^{N_1}\sum_{j=1}^{N_2}\ln\frac{\exp\left(-\frac{1}{2\sigma_{i,j}^2(\theta_j)}|a_{i,j}
   -\mu_{i,j}(\theta_j)|^2\right)}{\sqrt{2\pi}\sigma_{i,j}(\theta_j)}\right] \\
   & + E_{-k}\left[\sum_{j=1}^{N_2}\ln\frac{\exp\left(-\frac{1}{2\sigma_{0}^2}(\theta_j-\theta_{0})^2\right)}
   {\sqrt{2\pi}\sigma_0}\right] + C.
\end{align*}

Define $R_i = \prod_{j\neq k}{a_{i,j}}$ and re-write the above equality as
\begin{align*}
  \ln q(\mathcal{G}_k) = {}& E_{-k}\left[-\frac{1}{2\sigma_y^2}\norm{y-\sum^{N_1}_{i=1} a_{i,k}R_i V_i}_2^2\right] \\
   &- \sum_{i=1}^{N_1}\frac{1}{2\sigma_{i,k}^2(\theta_k)}|a_{i,k}
   -\mu_{i,k}(\theta_k)|^2 - \ln\prod_{i=1}^{N_1} \sigma_{i,k}(\theta_k) \\
   &- \frac{1}{2\sigma_0^2}(\theta_k-\theta_0)^2 + C.
\end{align*}

We need to calculate the marginal distribution of $\theta_j$.
Let $A_k = (a_{1,k},\cdots,a_{N_1,k})^\intercal\in \mathbb{R}^{N_1}$. Then
\begin{align*}
  q(\theta_k) &= \int_{R^{N_1}} \ q(\theta_k,A_k) \ dA_k\\
   &= \int_{R^{N_1}} e^{-A_k^\intercal \Sigma(\theta_k) A_k + \beta^\intercal(\theta_j) A_k } \ dA_k \cdot\frac{1}{\prod_i \sigma_{i,k}(\theta_k)}
        \cdot e^{-\sum_{i=1}^{N_1}\frac{\mu_{i,k}^2(\theta_k)}{2\sigma_{i,k}^2(\theta_k)}-\frac{(\theta_k-\theta_0)^2}{2\sigma^2_0}} \cdot C,
\end{align*}
where
\begin{align*}
  \Sigma_{m,m}(\theta_k) &= \frac{1}{2\sigma^2_y}E(R_m^2)V_m^\intercal V_m +  \frac{1}{2\sigma_{m,k}^2(\theta_k)} \quad \text{for} \quad m=1,2,\cdots,N_1,\\
  \Sigma_{m,n}(\theta_k) &= \frac{1}{2\sigma^2_y}E(R_mR_n)V_m^\intercal V_n
  \quad \text{for} \quad m\neq n, \\
  \beta_m(\theta_k) &= \frac{1}{\sigma_y^2}E(R_m)V_m^\intercal y
            + \frac{\mu_{m,k}(\theta_k)}{\sigma_{m,k}^2(\theta_k)}
            \quad \text{for} \quad m=1,2,\cdots,N_1.
\end{align*}

Note that
\begin{equation*}
  \int_{R^{N_1}} e^{-A_k^\intercal \Sigma A_k + \beta^\intercal A_k } \ dA_k = \sqrt{\frac{\pi^{N_1}}{\det\Sigma}}\exp(\frac{1}{4}\beta^\intercal\Sigma^{-1}\beta).
\end{equation*}
Therefore,
\begin{equation}\label{theta}
  q(\theta_k) \propto q^*(\theta_k) = \frac{1}{\sqrt{\det\Sigma} \cdot \prod_i \sigma_{i,k}(\theta_k)}
  \exp\left(\frac{1}{4}\beta^\intercal\Sigma^{-1}\beta-
  \frac{1}{2\sigma_0^2}(\theta_k-\theta_0)^2-\sum_{i=1}^{N_1}\frac{\mu_{i,k}^2(\theta_k)}{2\sigma_{i,k}^2(\theta_k)}\right).
\end{equation}
We can compute the joint for $\theta_k,a_{p,k}$,
\begin{multline}
   q^*(\theta_k,a_{p,k}) =
  \frac{1}{\sqrt{\det\Sigma} \cdot \prod_i \sigma_{i,k}(\theta_k)}
  \\
  e^{\left(\frac{1}{4}\beta^\intercal\Sigma^{-1}\beta-c_p(\theta_k)
  -\frac{(\theta_k-\theta_0)^2}{2\sigma^2_0}-
  \frac{1}{2\sigma_y^2}\left(a_{p,k}^2E(R_p^2)v_p^\intercal v_p\right)
  \right)},
\end{multline}
where
\begin{align*}
  \Sigma_{m,m} &= \frac{1}{2\sigma^2_y}E(R_m^2)V_m^\intercal V_m +  \frac{1}{2\sigma_{m,k}^2(\theta_k)} \quad \text{for} \quad m\neq p, \\
  \Sigma_{m,n} &= \frac{1}{2\sigma^2_y}E(R_mR_n)V_m^\intercal V_n
  \quad \text{for} \quad m\neq n,m\neq p,n\neq p, \\
  \beta_m &=p \frac{1}{\sigma_y^2}E(R_m)V_m^\intercal y
            + \frac{\mu_{m,k}(\theta_k)}{\sigma_{m,k}^2(\theta_k)}
            + \frac{1}{\sigma_y^2}a_{p,k}E(R_pR_m)V_p^\intercal V_m,\\
   c_p(\theta_k)&=     \sum_{i=1}^{N_1}\frac{\mu_{i,k}^2(\theta_k)}{2\sigma_{i,k}^2(\theta_k)}
  +\frac{(a_{p,k}-\mu_{p,k})^2}{2\sigma_{p,k}^2(\theta_k)} 
  - \frac{1}{\sigma_y^2}a_{p,k}E(R_p)v_p^\intercal y.
\end{align*}

The details of the expansions are given in Appendix A1.
Next, we write down the algorithm for Variational Bayesian method
in Algorithm 2.

\begin{algorithm}
\caption{Variational Bayesian Method.}
\begin{algorithmic}[1]
\State Set initial value $\theta_j = 0$ for all $j$.
\State Calculate $E(a_{i,j})$ from \eqref{gp} for all $i,j$.
\State Calculate $E(R_m),\ E(R_m^2)$ and  $E(R_mR_n)$ by \eqref{RMN}.
\State Set $\delta \theta = 1, \delta a = 1.$
\While {$\delta a>tol_a$ and $\delta \theta>tol_\theta$}
\State Let $E(a')= E(a), E(\theta') = E(\theta).$
\For {$j=1,2,\cdots,N_2$}
\State Use equations \eqref{vtheta} to update $E(\theta_j)$ and $\sigma_{\theta_j}$.
\For {$i=1,2,\cdots,N_1$}
\State Use equations \eqref{vaij} to update $E(a_{i,j})$ and $\sigma_{a_{i,j}}$.
\EndFor
\State Update $E(R_m),\ E(R_m^2)$ and  $E(R_mR_n)$ by \eqref{RMN}.
\EndFor
\State Update $\delta \theta = \sum_j||E(\theta_j) - E(\theta_j')||^2_2 $ and $\delta a = \sum_{i,j}||E(a_{i,j})-E(a'_{i,j})||^2$
\EndWhile
\State Use equations \eqref{evkappa} to get $E(\kappa)$ and $\sigma(\kappa)$.
\end{algorithmic}
\end{algorithm}

\subsection{MCMC method}

From the posterior density given in \eqref{post_den}, we use the Metropolis-Hastings Gibbs sampling  algorithm from the  following set up:
\begin{align*}
  p(y|a) &\varpropto \exp\left(-\frac{1}{\sigma_{y}^2}\norm{y-\sum^{N_1}_{i=1} \left(\prod^{N_2}_{j=1} a_{i,j}\right)V_i}^2\right) \\
  \pi(a|\mu) &\varpropto \prod^{N_1}_{i=1} \prod^{N_2}_{j=1} \ \exp\left(-\frac{1}{\sigma_{\theta_j}^2}|a_{i,j}-\mu_{\theta_j}|^2\right) \\
  \pi(\mu) &\varpropto \prod^{N_2}_{j=1} \pi(\theta_j),
\end{align*}
where $y$ are observations, $a=\{a_{i,j}\},\ \mu=(\theta_1,\cdots,\theta_{N_2}), \ \mu_{\theta_j}$ and $\sigma_{\theta_j}^2$ are mean and variance estimation from the Appendix.

\section{Some convergence analysis}
\label{sec:convergence}

To show the convergence of the proposed method, we need to show the convergence of the forward and the inverse problems. For the forward problem, we show that the proposed separable  representation can adequately approximate $u$. The greedy algorithm for the variational form converges and the solution converges to $u$ in some appropriate metric.
For the inverse problem solution, given the observed values, under some conditions, the posterior distribution of $\kappa$ (or $\mu$) converges to the true parameter value in distributional sense.

\subsection{Convergence of the forward problem}
In Section \ref{sec:forward}, we introduced fully separated framework for parameter-dependent elliptic partial differential equations and provided an iterative algorithm.
Next, we state some convergence results following \cite{le2009results}
for this algorithm.
We restrict ourselves to the simplest case. Find $a_i$ in $L^2(\Omega_\mu)$ 
and $v_i$ in $H^1_0(\Omega_x)$ 
, such that $u = \sum_{i\geq 1}a_i(\mu)v_i(x)$ is the solution of
\begin{align}\label{RA1}
  -\nabla\cdot(\kappa(x,\mu)\nabla u(x,\mu)) &= f \textrm{ in } \Omega_x, \nonumber \\
  u|_{\partial \Omega_x} &= 0,
\end{align}
given $0<c_1\leq\kappa(x,\mu)\leq c_2<\infty$ in the bounded space $\Omega$, where $\mu\in\Omega_{\mu},\ x\in\Omega_x$, and $\Omega = \Omega_\mu \times \Omega_x,\ \Omega_\mu\in R^{N_2},\ \Omega_x\in R^2$.
Following \cite{le2009results}, we introduce the tensor product Hilbert space
\begin{equation}\label{eq:Gamma}
  \Gamma(\Omega)=L^2(\Omega_\mu)\bigotimes H^1_0(\Omega_x)
\end{equation}
with inner product:
\begin{equation*}
  \langle u,v \rangle = \int_{\Omega} \kappa(x,\mu) \nabla u(x,\mu) \cdot \nabla v(x,\mu) \quad u,v\in\Gamma
\end{equation*}
and the associated norm
\begin{equation*}
  \|u\|^2 = \int_{\Omega} \kappa(x,\mu) |\nabla u(x,\mu)|^2.
\end{equation*}
Note that
\begin{equation}\label{eq:av}
    (a_n,v_n)= \argmin_{(a,v)\in L^2(\Omega_\mu)\times H^1_0(\Omega_x)} \int_{\Omega}\frac{1}{2}\kappa|\nabla (a\otimes v)|^2-\int_{\Omega}f_{n-1}\cdot(a\otimes v).
\end{equation}
Now we introduce $u_n$ satisfying
\begin{align}\label{eq:un}
    -\nabla\cdot(\kappa\nabla u_n) &= f_n \textrm{ in } \Omega_x, \nonumber \\
    u_n|_{\partial \Omega_x} &= 0,
\end{align}
then we have
\begin{equation}
  u_n = u_{n-1}-a_n\otimes v_n
\end{equation}
therefore $u_n = u_0-\sum^n_{i=1}a_i\otimes v_i$.
We denote the tensor product $a\otimes v$ as
$$ a\otimes v(\mu,x) := a(\mu)\cdot v(x). $$
Following \cite{le2009results}, we can prove the following lemma.
\begin{theorem}
Assume that $(a_n,v_n)$ satisfies \eqref{eq:av}. Denote the energy at iteration $n$ as
\begin{equation*}
  E_n = \frac{1}{2}\int_\Omega\kappa|\nabla (a_n\otimes v_n)|^2-\int_{\Omega}f_{n-1}\ a_n\otimes v_n,
\end{equation*}
then we have
\begin{equation}\label{eq:Con1}
  \lim_{n\rightarrow\infty}E_n=\lim_{n\rightarrow\infty}\|a_n\otimes v_n\|=0.
\end{equation}
Moreover,
\begin{equation}\label{eq:Con2}
  \lim_{n\rightarrow\infty}u_n = 0 \text{ in } \Gamma,
\end{equation}
{where $\Gamma$ is defined in \eqref{eq:Gamma},} and
\begin{equation}\label{eq:Con3}
  \lim_{n\rightarrow\infty}f_n = 0 \text{ in } \Gamma^*.
\end{equation}
where $\Gamma^*$ is the dual space of $\Gamma$.
\label{conv1}
\end{theorem}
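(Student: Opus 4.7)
The plan is to exploit the variational characterization of the greedy step to turn the energy decrement into a Pythagorean identity on $\|u_n\|$, deduce summability of $\|a_n\otimes v_n\|^2$, and then upgrade the ``increments vanish'' statement to actual convergence $u_n\to 0$ by exploiting density of rank-one tensors in $\Gamma$. The dual statement on $f_n$ will follow from the definition of the norm on $\Gamma^*$.

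First I would compute the Euler--Lagrange equation for the minimization \eqref{eq:av}. Differentiating the quadratic functional in the direction $(a_n,v_n)$ itself gives
\begin{equation*}
 \int_\Omega \kappa\,|\nabla(a_n\otimes v_n)|^2 \;=\; \int_\Omega f_{n-1}\,(a_n\otimes v_n),
\end{equation*}
so that $E_n=-\tfrac{1}{2}\|a_n\otimes v_n\|^2$. Next, using the weak form of \eqref{eq:un} applied to the test function $a_n\otimes v_n\in\Gamma$, I get $\langle u_{n-1},a_n\otimes v_n\rangle_\Gamma=\int_\Omega f_{n-1}\,(a_n\otimes v_n)=\|a_n\otimes v_n\|^2$. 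Combining this with $u_n=u_{n-1}-a_n\otimes v_n$ and expanding $\|u_n\|^2$ yields the key Pythagorean identity
\begin{equation*}
 \|u_n\|^2 \;=\; \|u_{n-1}\|^2 - \|a_n\otimes v_n\|^2.
\end{equation*}
Telescoping and using $\|u_n\|^2\geq 0$ gives $\sum_{n\geq 1}\|a_n\otimes v_n\|^2\leq\|u_0\|^2<\infty$, which immediately implies \eqref{eq:Con1}.

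The main obstacle is \eqref{eq:Con2}: summability of the increments does not a priori force $u_n$ itself to zero. For this I would use the optimality of $(a_n,v_n)$ against arbitrary normalized rank-one tensors. Optimizing the functional over scalar multiples $\lambda(a\otimes v)$ of a fixed rank-one element shows
\begin{equation*}
 \|a_n\otimes v_n\|^2 \;\geq\; \sup_{a\otimes v\neq 0}\frac{\langle u_{n-1},a\otimes v\rangle_\Gamma^{\,2}}{\|a\otimes v\|^2}.
\end{equation*}
Since $\|u_n\|$ is non-increasing, $\{u_n\}$ is bounded in the Hilbert space $\Gamma$, so a subsequence $u_{n_k}$ converges weakly to some $u_\infty\in\Gamma$. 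Passing to the limit in the above inequality (together with $\|a_{n_k+1}\otimes v_{n_k+1}\|\to 0$) gives $\langle u_\infty,a\otimes v\rangle_\Gamma=0$ for every rank-one tensor. Since finite linear combinations of rank-one elements are dense in the Hilbert tensor product $L^2(\Omega_\mu)\otimes H^1_0(\Omega_x)$ (the ellipticity bounds $c_1\leq\kappa\leq c_2$ make the $\kappa$-weighted inner product equivalent to the canonical one, so density is preserved), we conclude $u_\infty=0$. Weak lower semicontinuity of the norm then gives $0=\|u_\infty\|\leq\liminf_k\|u_{n_k}\|$, which combined with monotonicity of the full sequence $\|u_n\|^2$ forces $\|u_n\|\to 0$, i.e.\ \eqref{eq:Con2}.

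Finally, \eqref{eq:Con3} is essentially automatic: for any $w\in\Gamma$, the weak form of \eqref{eq:un} yields
\begin{equation*}
 \langle f_n,w\rangle_{\Gamma^*,\Gamma} \;=\; \int_\Omega \kappa\,\nabla u_n\cdot\nabla w \;=\; \langle u_n,w\rangle_\Gamma \;\leq\; \|u_n\|_\Gamma\,\|w\|_\Gamma,
\end{equation*}
so $\|f_n\|_{\Gamma^*}\leq\|u_n\|_\Gamma\to 0$ by the previous step. The only nontrivial point I anticipate having to be careful about is the density of rank-one tensors in the $\kappa$-weighted space $\Gamma$; the ellipticity hypothesis on $\kappa$ handles this, and the rest of the argument follows the template of \cite{le2009results}.
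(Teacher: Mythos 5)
Your overall route coincides with the paper's (both follow \cite{le2009results}): the Euler--Lagrange identity tested against $(a_n,v_n)$ gives $E_n=-\tfrac{1}{2}\|a_n\otimes v_n\|^2$; the Pythagorean identity $\|u_{n-1}\|^2=\|u_n\|^2+\|a_n\otimes v_n\|^2$ (the content of Proposition~\ref{prop2}) gives summability of $\|a_n\otimes v_n\|^2$ and hence \eqref{eq:Con1}; the bound $\|f_n\|_{\Gamma^*}\le\|u_n\|$ correctly reduces \eqref{eq:Con3} to \eqref{eq:Con2}; and the extraction of a weak subsequential limit $u_\infty$ annihilating all rank-one tensors matches the paper's appendix, which stops at the corresponding variational inequality for $u_\infty$ and defers the remainder to \cite{le2009results}. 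Up to that point your proposal is sound, and in fact slightly more explicit than the paper's written proof (you carry out the optimization over scalar multiples $\lambda(a\otimes v)$ and invoke density of rank-one tensors, which the norm equivalence coming from $c_1\le\kappa\le c_2$ justifies).

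The genuine gap is your final step for \eqref{eq:Con2}. Weak lower semicontinuity gives $\|u_\infty\|\le\liminf_k\|u_{n_k}\|$; with $u_\infty=0$ this reads $0\le\liminf_k\|u_{n_k}\|$, an inequality in the useless direction. Weak convergence of a subsequence to $0$, even combined with monotonicity of $\|u_n\|$, does not force $\|u_n\|\to 0$: an orthonormal sequence converges weakly to $0$ while all its norms equal $1$, and nothing you established rules out such behavior for the residuals. This weak-to-strong upgrade is exactly the nontrivial core of convergence proofs for pure greedy algorithms (Jones, DeVore--Temlyakov, and the adaptation in \cite{le2009results}). The standard repair uses your own sup-inequality at step $n+1$, namely $\bigl|\langle u_n, a\otimes v\rangle\bigr|\le \|a_{n+1}\otimes v_{n+1}\|\,\|a\otimes v\|$ for every rank-one $a\otimes v$: writing $u_n=u_m-\sum_{k=m+1}^{n}a_k\otimes v_k$ for $m<n$ and setting $c_k=\|a_k\otimes v_k\|$, one obtains
\begin{equation*}
\Bigl|\langle u_n,u_m\rangle-\|u_n\|^2\Bigr|\le c_{n+1}\sum_{k=m+1}^{n}c_k .
\end{equation*}
Since $\sum_k c_k^2<\infty$ implies $\liminf_n c_{n+1}\sum_{k\le n}c_k=0$, along a suitable subsequence $\langle u_n,u_m\rangle\to\ell^2:=\lim_n\|u_n\|^2$ uniformly in $m<n$, whence $\|u_n-u_m\|^2=\|u_n\|^2+\|u_m\|^2-2\langle u_n,u_m\rangle\to 0$ along that subsequence. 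That subsequence is therefore strongly Cauchy; its strong limit coincides with its weak limit, which your orthogonality-plus-density argument identifies as $0$; hence $\ell=0$, and only at this point does monotonicity of $\|u_n\|$ extend the conclusion to the full sequence. Without this (or an equivalent) argument, \eqref{eq:Con2} --- and with it \eqref{eq:Con3} --- remains unproven.
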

\begin{proof}
Given in Appendix A2.
\end{proof}
Based on this convergence, one can estimate the difference
between the posterior that uses truncated series and the posterior
that uses all the terms in appropriate norms and show that this
difference is small independent of the
dimension of the problem
\cite{mondal2010bayesian, wei2012reduced,stuart2010inverse}.

\subsection{Consistency of Bayesian Method}
In this section, we discuss and derive the consistency of Bayesian methodology.
The main idea is that if we have more and more observations, the posterior distribution will concentrate around true data generating  distribution (see e.g.,
\cite{ghosh2003bayesian}). Even though, we have observations on finitely many grids, posterior consistency is a desirable property for an estimator to have.
If the proposed model is true and if we have enough prior mass around the true parameter then the likelihood will pull the posterior distribution around the true parameter value.  Here, we assume the data generating model, is given in \eqref{ermodel}, posterior in \eqref{posterior} and the priors are \eqref{gp_prior} and \eqref{KLE}. We also restrict $\theta_j$'s to a compact set.  In this setting, we suppose, $f_{\kappa}$ is the density for the $\kappa$ in \eqref{ermodel} and $\kappa^*$ is the true permeability field.
Furthermore, let $f^*_x(y)$ be the true data generating density at $x \in \Omega_x$, where $y\in \Omega_y$, a subset of Euclidean space. We assume we have observation in $x_1,\dots,x_M$ under the measure $H(x)$ on $\Omega_x$. For a fixed design point, we denote the empirical measure $H_M$.  Then, we have the following consistency theorem (cf. \cite{ghosh2003bayesian}).
\begin{theorem}
For  the model and prior setting given in equation \eqref{hr_model}, for any  neighborhood $U_\epsilon$ of  $f^*_x(y)$ given as
$\int_{\Omega_x}\int_{\Omega_y}(\sqrt{f_x(y)}-\sqrt{f^*_x(y)})^2dydH(x)<\epsilon$, we have  $\Pi(U_\epsilon|data)\rightarrow 1$  with probability 1 as $M\rightarrow \infty$.
\label{consistency1}
\end{theorem}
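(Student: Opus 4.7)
The plan is to invoke Schwartz's posterior consistency theorem adapted to the independent-non-identically-distributed (INID) design setting (cf.\ Ghosh--Ramamoorthi, and Choi--Schervish), which, for a Hellinger-type neighborhood $U_\epsilon$ as defined in the statement, reduces the problem to verifying two standard ingredients: (i) prior positivity on Kullback--Leibler neighborhoods of $f^*_x(y)$, and (ii) the existence of a uniformly exponentially consistent sequence of tests for $f^*$ against $U_\epsilon^c$. Since the likelihood \eqref{ermodel} is Gaussian with known variance $\sigma_f^2$, both Hellinger distance and KL divergence between $f_x$ and $f^*_x$ reduce to weighted $L^2$ differences of the forward responses $F(\kappa)(x)$ and $F(\kappa^*)(x)$; this is the common computation I would first set down to convert the abstract neighborhood condition into a quantitative condition on $\kappa$.

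For (i), I would argue in two stages. First, use the assumed $L^2(\Omega_x)$-continuity of the forward map $\kappa\mapsto F(\kappa)$ for the elliptic problem \eqref{model} (standard, under the ellipticity bounds $c_1\leq \kappa\leq c_2$) so that a KL neighborhood of $f^*_x$ is implied by a small-enough $L^\infty$ (or $L^2$) neighborhood of $\log\kappa^*$. Second, exploit the KLE prior \eqref{KLE}: since $\theta_k\sim\mathcal{N}(0,1)$ with eigenbasis $\{\Phi_k\}$ complete in $L^2(\Omega_x)$, truncating at $N$ terms and using the positivity of Gaussian measures on every open ball in $\mathbb{R}^N$, the prior places positive mass on any $L^2$ ball around the truncated projection of $\log\kappa^*$; the truncation error is controlled by the eigenvalue decay, which I would fold into the Cameron--Martin/support argument. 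Combined with the Gaussian process prior on the $a_{i,j}$'s (which, since $a_{i,j}$ is deterministic once $\kappa$ is fixed in the forward problem, only contributes to the separable approximation error controlled by Theorem \ref{conv1}), this delivers prior positivity.

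For (ii), I would use the restriction of each $\theta_j$ to a compact set to control the effective parameter space. On this compactum the map $(\theta_1,\dots,\theta_{N_2})\mapsto F(\kappa)$ is continuous with a uniformly bounded modulus, so the model class $\{f_{\kappa,x}\}$ is (up to the Gaussian-process layer) a uniformly equicontinuous family. I would then build the tests by the classical Le Cam--Birgé recipe: cover the complement $U_\epsilon^c$ by finitely many Hellinger balls of radius $\epsilon/2$, construct an exponentially consistent likelihood-ratio-type test against each center using the Gaussian tail of the observation noise (a direct Hoeffding/Gaussian concentration bound on $M^{-1}\sum_{m=1}^M (y_m-F(\kappa)(x_m))^2$ under $H_M$ suffices), and take the maximum. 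Passing from the design measure $H_M$ to $H$ requires a Glivenko--Cantelli-style argument on the empirical measure $H_M$, which I would state as a regularity assumption on the design.

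Given these two ingredients, Schwartz's theorem yields $\Pi(U_\epsilon^c\mid \text{data})\to 0$ almost surely. The main obstacle I anticipate is the compatibility between the infinite-dimensional KLE parametrization and the compact restriction of the $\theta_j$'s: reconciling ``positive prior mass near $\kappa^*$'' (which needs enough KLE coefficients) with ``compact parameter space'' (needed for uniform tests) requires either a sieve that grows with $M$ or an a priori assumption that $\log\kappa^*$ lies in the compactly supported truncated class up to a small approximation error. The cleanest route is a sieve $\mathcal{F}_M = \{\kappa : \max_j |\theta_j|\le r_M\}$ with $r_M\to\infty$ slowly enough, verifying prior tail bounds $\Pi(\mathcal{F}_M^c)=o(e^{-cM})$, and carrying out the test construction on $\mathcal{F}_M$; this is the step I expect to absorb most of the technical work and where the compactness hypothesis stated in the theorem is essential.
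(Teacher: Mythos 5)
Your high-level skeleton---Schwartz-type consistency via (i) prior positivity on Kullback--Leibler neighborhoods and (ii) exponentially consistent tests on a sieve with controlled metric entropy---is indeed the framework the paper uses: its decomposition \eqref{post} into a test term $\Phi_M$, a sieve $\mathscr{K}$, and a KL-lower-bounded denominator is exactly the sieve-augmented Schwartz argument of Ghosh--Ramamoorthi (Ch.~4.4) and Ghosal et al. However, there is a genuine gap in \emph{where} you verify the conditions. In the hierarchical model \eqref{hr_model} the coefficients $a_{i,j}$ are \emph{random}, carrying the Gaussian process prior \eqref{gp_prior}, and the posterior \eqref{post_den} is over $(\mu,a)$ jointly; the likelihood enters through the surrogate mean $u=\sum_i\bigl(\prod_j a_{i,j}(\theta_j)\bigr)v_i(x)$, not through the exact forward map $F(\kappa)$. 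Your dismissal of the GP layer (``$a_{i,j}$ is deterministic once $\kappa$ is fixed \dots\ only contributes to the separable approximation error controlled by Theorem \ref{conv1}'') therefore verifies prior positivity for a different model: Theorem \ref{conv1} concerns convergence of the greedy forward expansion and says nothing about posterior mass under the GP prior. The paper instead takes $\psi=\{a_{i,j}\}$ as the parameter and checks prior positivity on sup-norm balls $\sup_{i,j}|a_{i,j}-a^*_{i,j}|<c\sqrt{\delta}$ (positive GP mass on $L_\infty$ neighborhoods), which, together with the boundedness $|v_i(x)|\le C_0$, yields positive mass in every KL neighborhood of $f^*$; your Cameron--Martin argument for the KLE coefficients addresses only the $\theta$-layer and leaves the $a$-layer unhandled.

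Relatedly, your sieve sits on the wrong layer. You propose $\mathcal{F}_M=\{\kappa:\max_j|\theta_j|\le r_M\}$ with $r_M\to\infty$, anticipating that reconciling compactness with prior support ``absorbs most of the technical work''---but the theorem already assumes the $\theta_j$'s restricted to a compact set, so no sieve is needed there and your construction is vacuous where you place it. The unbounded directions are the GP paths $a_{i,j}$, and this is where the paper does the work: the sieve is $\mathscr{K}=\{\sup|a_{i,j}|\le d_M\}_{i,j}$ with $d_M=\sqrt{M}$, whose complement has prior mass at most $e^{-\beta M}$ by the Gaussian-process tail bound $P(\sup|a_{i,j}|>d)\le e^{-\alpha d^2}$ (Tokdar--Ghosh), and whose log Hellinger covering number is $o(M)$, obtained by gridding the bounded coefficient values (on $\mathscr{K}$, $|a_i(\mu)|\le M^{.5N_2}$, giving roughly $\left(\frac{2C_0N_1M^{.5N_2}}{\beta_1\epsilon'}\right)^{N_1}$ balls). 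These two bounds---GP tail and entropy $o(M)$---are the essential missing ingredients in your plan. Your test construction via Gaussian concentration and your Glivenko--Cantelli remark on passing between $H_M$ and $H$ are reasonable and consistent with the paper's citations to the existence-of-tests results; the repair needed is to redo prior-mass, sieve, and covering-number arguments at the level of the Gaussian-process coefficient functions rather than the KLE coordinates.
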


The details are given in Appendix A2. The proof uses our separation argument and additional conditions on the priors.
 Under the given prior, we have positive probability around any $L_\infty$ neighborhood of $a_i(\{\theta\})$. This condition ensures positive probability in any Kullback-Leibler neighborhood of $f^*$. Then, given the ratio of the  likelihood under true density and any other density incerases to infinity exponentially with $M$, the result follows.

If the observations points are fixed design points given by an empirical measure $H_M$,  then  we will have posterior concentration on  $U_\epsilon$, that is the set given by,  $\int_{\Omega_x}\int_{\Omega_y}(\sqrt{f_x(y)}-\sqrt{f^*_x(y)})^2dydH_M(x)<\epsilon$. For the convenience, here we assume that the design points are coming from underlying $H$ on $\Omega_x$, with density $h(x)$.

\section{Numerical Results}
\label{sec:simulation}

In this section, we first apply our fully separated method for the forward problem. Then under the structure of fully separated method, we use MCMC method and Variational Bayesian Method on inverse problem and Uncertainty Quantification.

\subsection{Results for Forward Problem}

In order to calculate $a_{i,j}$ and $v_i$, we first need to define a range on our parameter $\theta_j$. Based on our prior information $\theta_j \sim {\mathcal{N}}(0,1)$, we simply let $\theta_j\in [\theta_{\min},\theta_{\max}]$ and $\theta_{\min}=-5, \ \theta_{\max}=5.$ When $N_2\leq 200$, this range will cover more than $99.99\%$ of $\theta$ according to our standard Gaussian prior. We choose $21$ grid points evenly located in interval $[\theta_{\min},\theta_{\max}]$, which are the grid points of our discrete function $a_{i,j}$.

We test our model for $10$ and $30$ KL basis separately. For $N_2 = 10$, we choose $N_1 = 10$. For $N_2 = 30$, we choose $N_1=50$. We use fully separated method to get $a$ and $v$, thus given parameters we have $u_{fs}$. The calculation is performed on $50\times50$ grid.
To test the result, we randomly sample $100$ samples of $\mu=(\theta_1,\cdots,\theta_{N_2})$ on grid points based on the prior, then use FEM method to calculate $u_{fem}$ as the accurate result. Finally, we compute relative $L^2$ error and relative energy error between $u_{fem}$ and $u_{fs}$,
\begin{table}
\centering
\begin{tabular}{|c|cc|}
  \hline
 Error  & $N_2 = 10$ & $N_2 = 30$\\
\hline
  mean relative $L^2$ error & 0.0260\% & 0.0840\% \\
  mean relative energy error& 0.4728\% & 0.4828\% \\
  max relative $L^2$ error & 0.0882\% & 0.1942\% \\
  max relative energy error & 0.7256\% & 0.8772\% \\
\hline
\end{tabular}
\caption{Error in forward model.}
\label{errf}
\end{table}
We can see that all the relative errors are under $1\%$ (see Table \ref{errf}). Therefore, our fully separated method performs well for the forward problem.

\subsection{Results for Inverse Problem}

Given $M$ observations of $u$, we would like to estimate and to quantify the uncertainty of $\kappa$ parameterized by $\mu=(\theta_1,\dots,\theta_{N_2})$. Let $y=(y_1,\cdots,y_M)$ be observations. We assume the locations of these observations are evenly distributed on $\Omega_x$ as Figure \ref{fig:obs} shows. The number of observations could be small (9 points) or large (100 points) in our model. Note that on our $50\times 50$ grid, even 100 observation points only contains 4\% of total output data, which leads to ill-posedness of the inverse problem. We use Markov Chain Monte Carlo and Variational Bayesian Method separately to reconstruct $\kappa$.
We use KL expansion to generate the reference field $\kappa_{ref}$ based on our prior. In this example, we choose isotropic random Gaussian field with correlation function $C(\alpha,\beta)$ given by
$C(\alpha,\beta)=\sigma_{gf}^2\exp(-\frac{|\alpha-\beta|^2}{2l_0^2})$, where $l_0$ is the correlation lengths in both direction and $\sigma_{gf}$ determines the variation of the field. Here we let $\sigma_{gf}=0.2,\ l_0=0.1$.

\begin{figure}
    \centering
    \begin{subfigure}[t]{0.5\textwidth}
        \centering
        \includegraphics[height=2in]{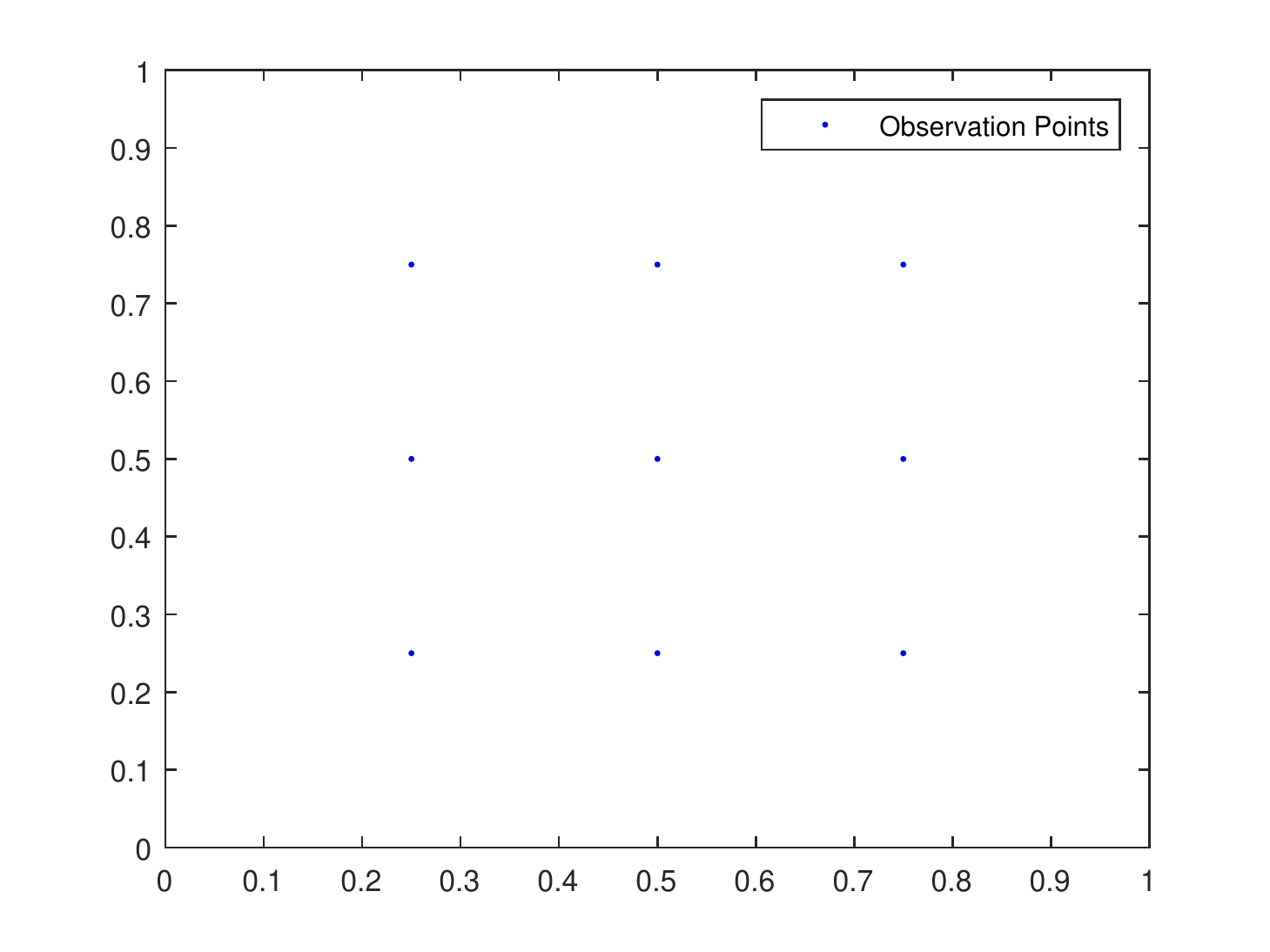}
        \caption{Fewer observations.}
    \end{subfigure}%
    \begin{subfigure}[t]{0.5\textwidth}
        \centering
        \includegraphics[height=2in]{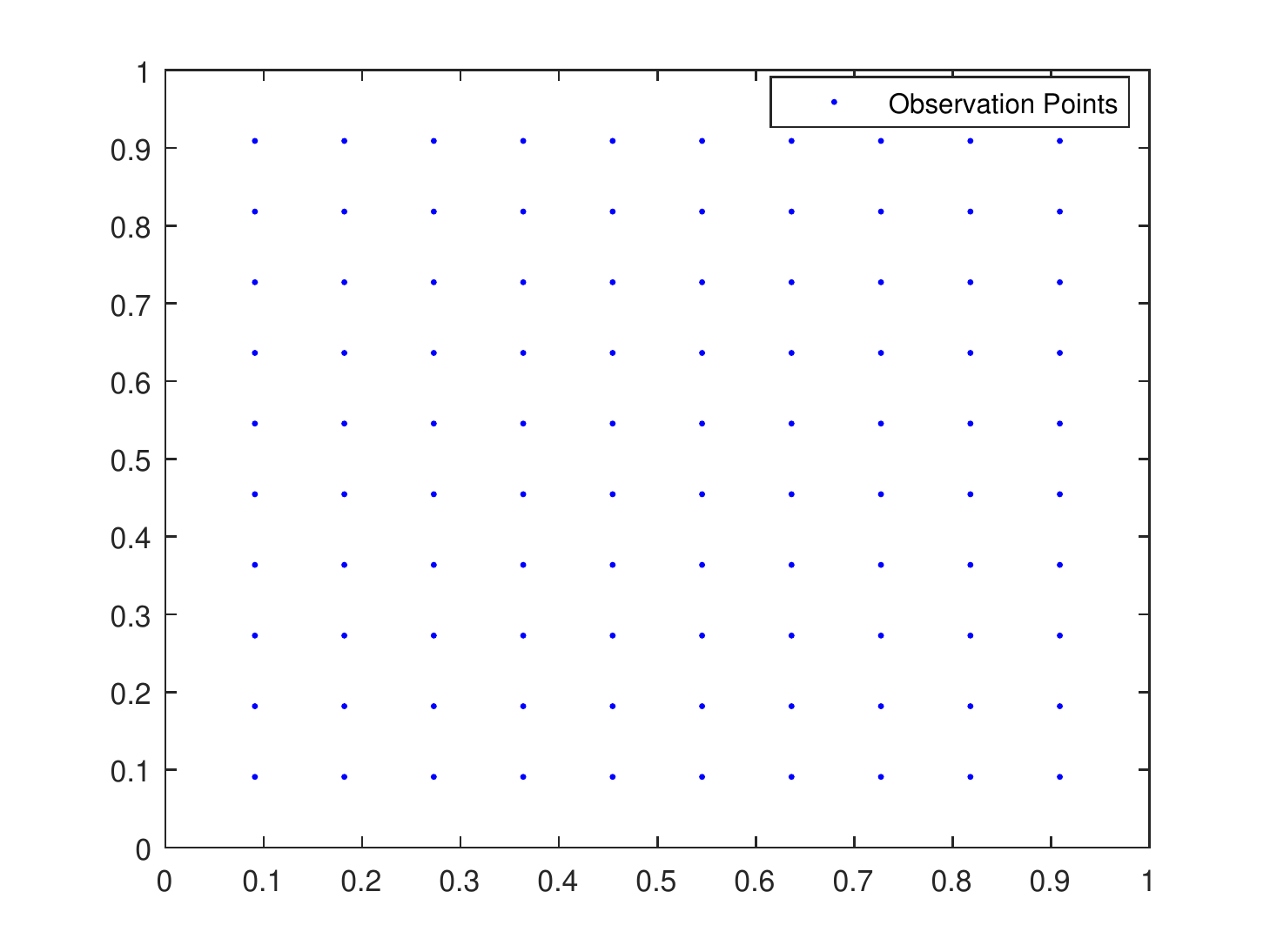}
        \caption{More observations.}
    \end{subfigure}
    \caption{Locations of fewer and more observation points.}
    \label{fig:obs}
\end{figure}

We set our Gaussian process parameters as $\sigma_a = 1$, $\sigma_1 = 10^{-3}$, $\lambda=1$. We let $\sigma_y$ be 1\% of the mean observed value. For MCMC, we use $10^7$ iterations and $10^5$ burn-in. For variational Bayesian Method, we use trapezoidal method to do numerical integration and split the domain into 50 intervals, and assume independence between different $a_{i,j}$ to ease our computational complexity. All our examples are performed by Matlab on a 16-Core CPU.

In the case of fewer (9 points, 0.36\% of total output) observations, due to the lack of observations, the problem become very ill-posed, which makes it difficult to reconstruct complicated field, thus we use a rather smooth field as reference field. In the case of more (100 points, 4\% of total output) observations, we try to reconstruct more complicated field. For the smooth case, we use $(a,v)$ of $N_2=10$, and for we use $(a,v)$ of $N_2=30$,  as in the forward problem example. The result is shown in Figure \ref{fig:ipresult}. {We can see that both the VB and MCMC methods give good reconstructions on the reference log permeability fields. Due to the lack of observation points in the left column, some details of the field cannot be reconstructed well, while the results in the right column with more observations have shown more details of the reference field. Moreover, from (b), (d) and (f) of Figure 3, it's clear that the variational Bayesian method gives a smoother and more accurate result over the reconstructed field than the result from MCMC method, thus performs better in this case.}

{Define $E\mu$ as the posterior mean of $\mu$ in each iteration such that $E\mu = (E\theta_1, E\theta_2, \dots, E\theta_{N_2})$, we compare the trace of all $E\theta_j$ from MCMC and variational Bayesian method in Figure \ref{fig:trace} for both fewer and more observations. Variational Bayes converges over a smooth path over first few iterations. With the increase in the dimensions of $\mu$ won't increase the steps VB method need to converge, while the number of steps that MCMC needs clearly depends largely on the dimensions of $\mu$, which makes variational Bayesian method a preferred choice when dealing with high-dimensional problems.} Finally, we show the distributions of parameters in the complicated case as we draw the posterior and prior of first 8 $\theta_j$ from KLE in Figure \ref{fig:theta_distribution}.

About the computational cost for variational method, we can use parallel processing for $i=1,2,\cdots,N_1$ inside each iterations (line 9 to 11 in Algorithm 2), but for MCMC we could not do parallel processing on a single chain since it is a Markov process. We do a runtime comparison between variational Bayesian method and MCMC method in Table \ref{rt1}. Similarly, we could perform parallel processing in line 8-10 of Algorithm 1 on the fully separated forward problem framework. { Moreover, when we use MCMC method, computational time inside each iteration with fully separated framework is much faster than the one without our framework (eg. 0.0025s compare to 0.022s for complicated case here), which makes a significant difference when the number of iterations is very large in most cases.}

The variational Bayes estimator provides an adequate posterior approximation, while converging very fast, even in complicated cases. {We define $\delta\mu$ as the $L^2$ norm of the differences of $E\mu$ between adjacent variational Bayesian iterations.} A convergence diagnostic can be found in Figure \ref{fig:VBcon}, where the fast convergence can be noticed.

\section{Discussion}

In this paper,
we study the uncertainty quantification in inverse problems. We consider
a multiscale diffusion problem and use separation of stochastic and spatial
variables. A new separation technique is introduced.
Under separable representation, proposed Bayesian approaches provide an accurate and fast solution of the inverse problem and  provide measure of uncertainty of the parameters given the data. A fast variational Bayes based posterior approximation algorithm has been introduced which produces further computational efficiency and an accurate estimation of the underlying field.

\begin{figure}
    \centering
    \begin{subfigure}[t]{0.4\textwidth}
        \centering
        \includegraphics[width=\textwidth]{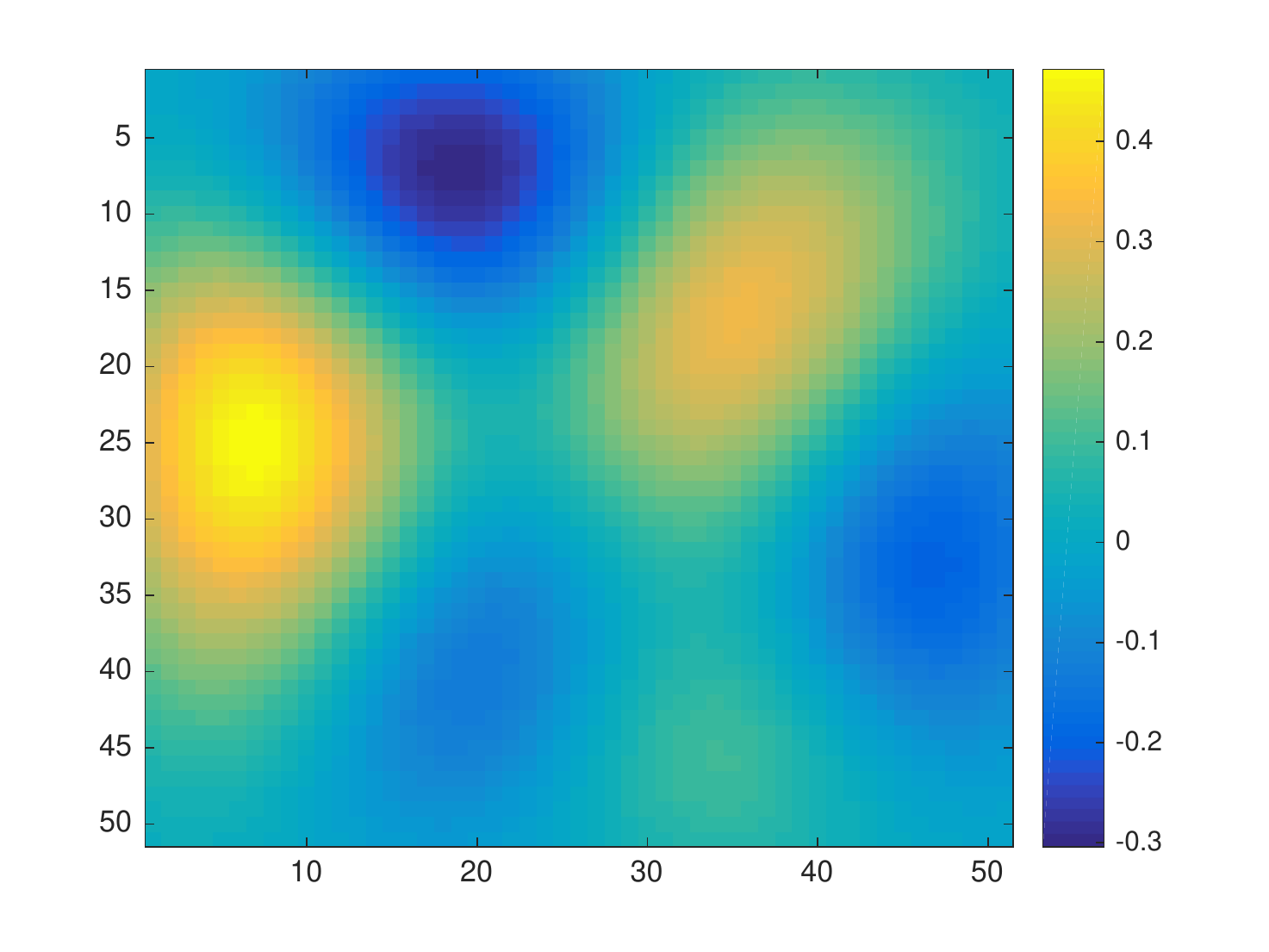}
        \caption{Reference log permeability field (smooth case).}
    \end{subfigure}\quad
    \begin{subfigure}[t]{0.4\textwidth}
        \centering
        \includegraphics[width=\textwidth]{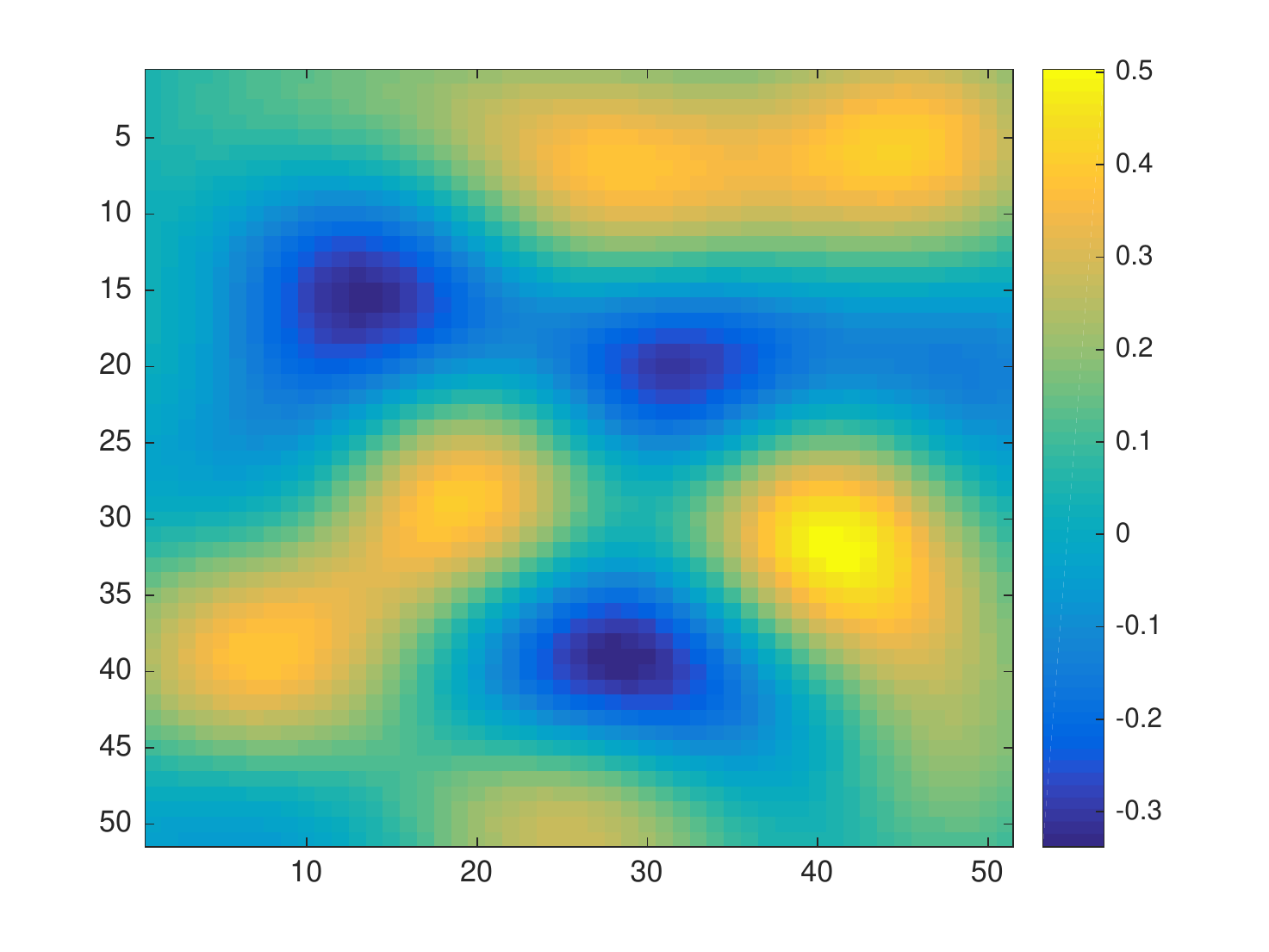}
        \caption{Reference log permeability field (complicated case).}
    \end{subfigure}\quad
    \begin{subfigure}[t]{0.4\textwidth}
        \centering
        \includegraphics[width=\textwidth]{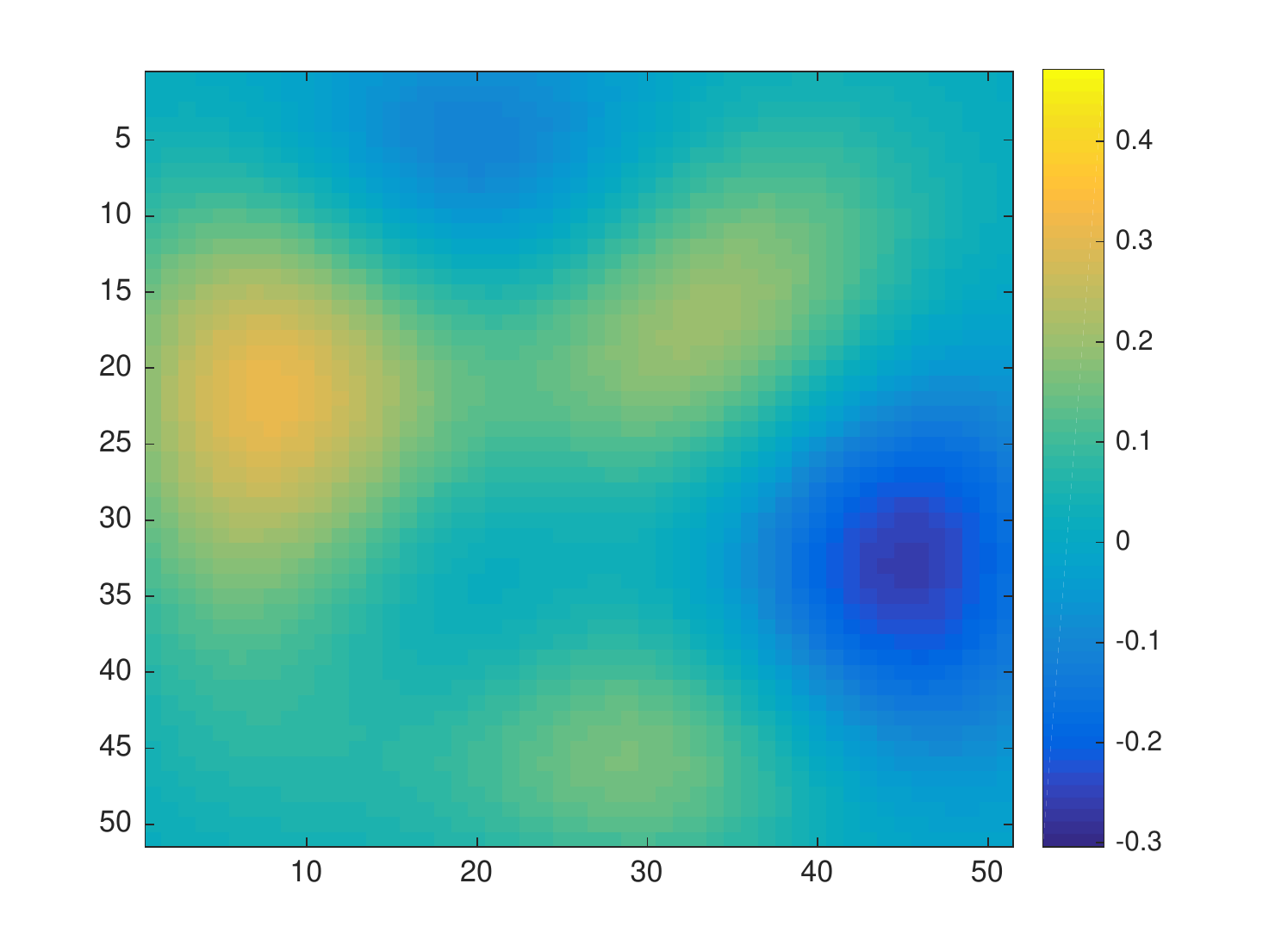}
        \caption{Mean log permeability field by VB method (smooth case).}
    \end{subfigure}\quad
    \begin{subfigure}[t]{0.4\textwidth}
        \centering
        \includegraphics[width=\textwidth]{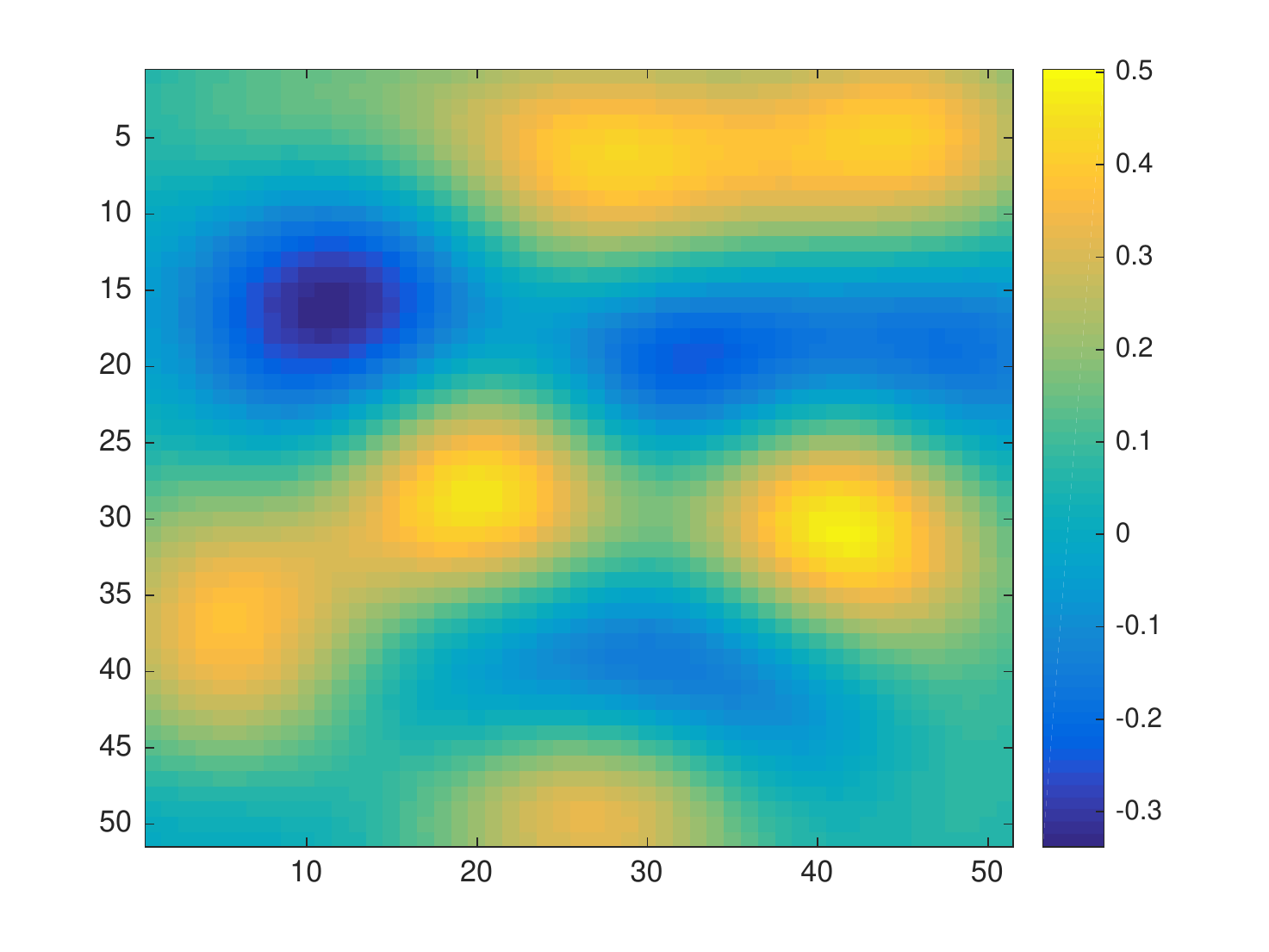}
        \caption{Mean log permeability field by VB method (complicated case).}
    \end{subfigure}\quad
    \begin{subfigure}[t]{0.4\textwidth}
        \centering
        \includegraphics[width=\textwidth]{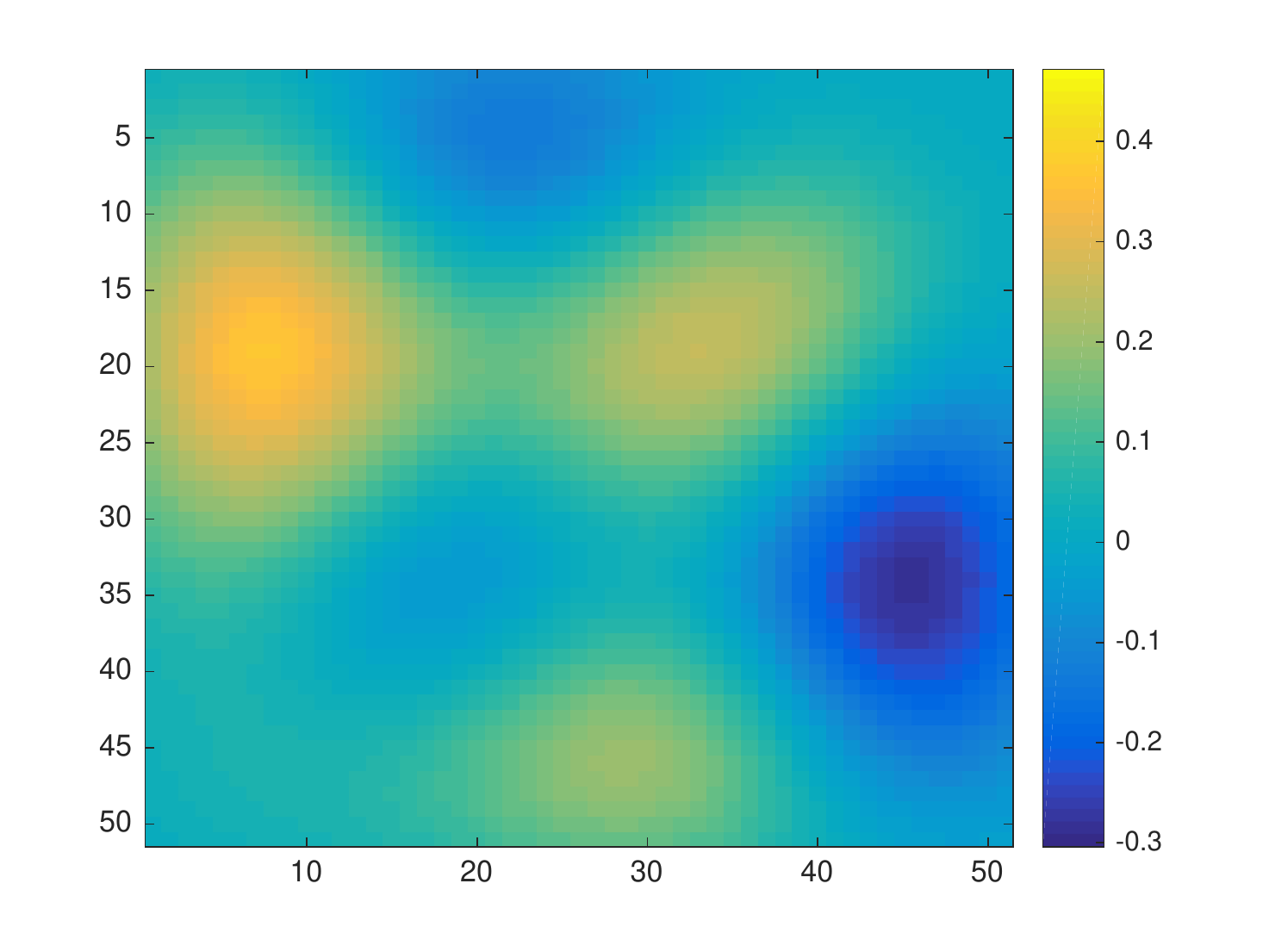}
        \caption{Mean log permeability field by MCMC method (smooth case).}
    \end{subfigure}\quad
    \begin{subfigure}[t]{0.4\textwidth}
        \centering
        \includegraphics[width=\textwidth]{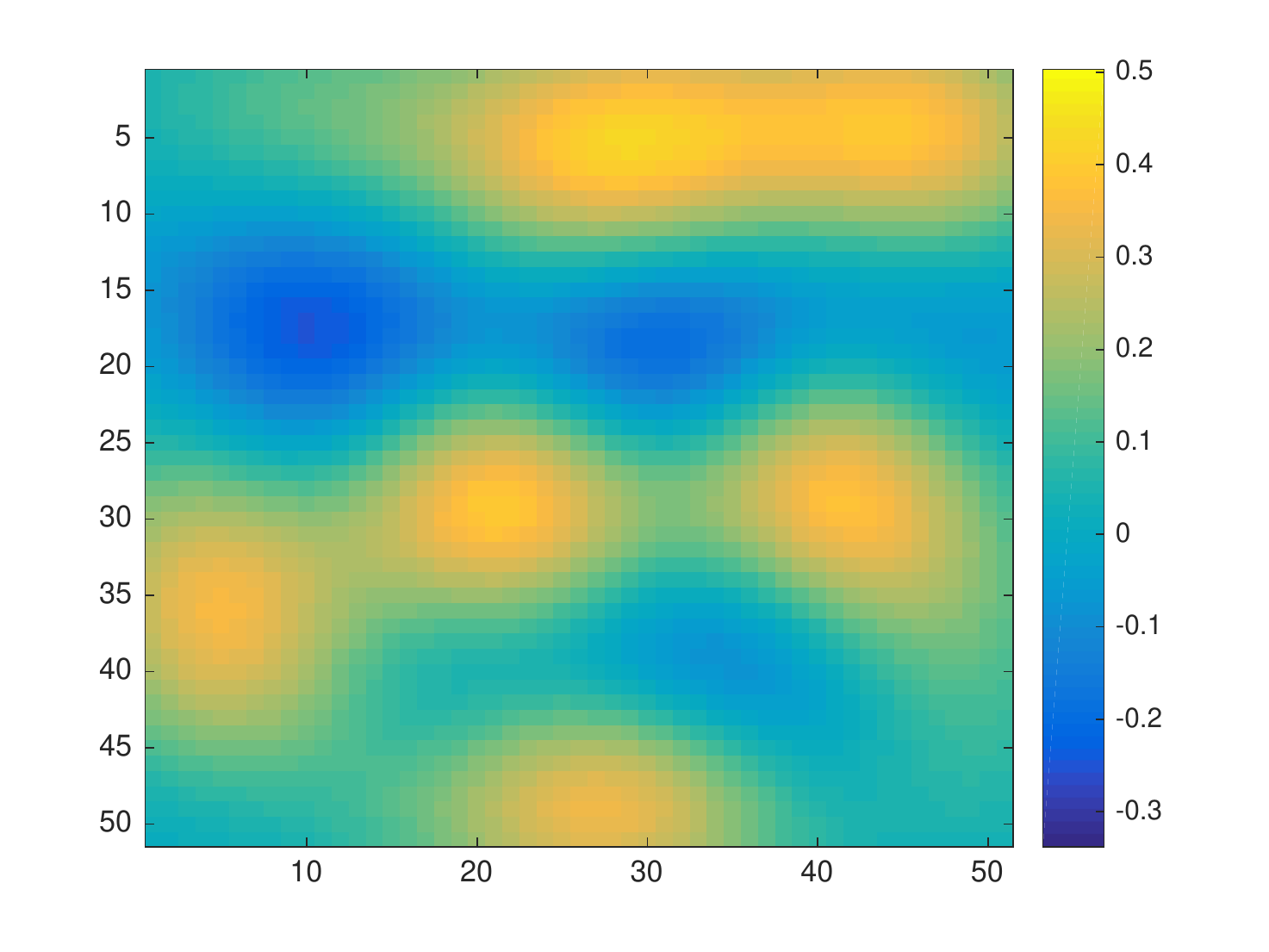}
        \caption{Mean log permeability field by MCMC method (complicated case).}
    \end{subfigure}
    \caption{Inverse problem result. From top to bottom: reference log permeability, Mean log permeability field by Variational Bayesian method and Mean log permeability field by MCMC method. From left to right: fewer observations (9 points) with smooth reference field to more observations (100 points) with complicate reference field.}
    \label{fig:ipresult}
\end{figure}

\begin{figure}
    \centering
    \begin{subfigure}[t]{0.4\textwidth}
        \centering
        \includegraphics[width=\textwidth]{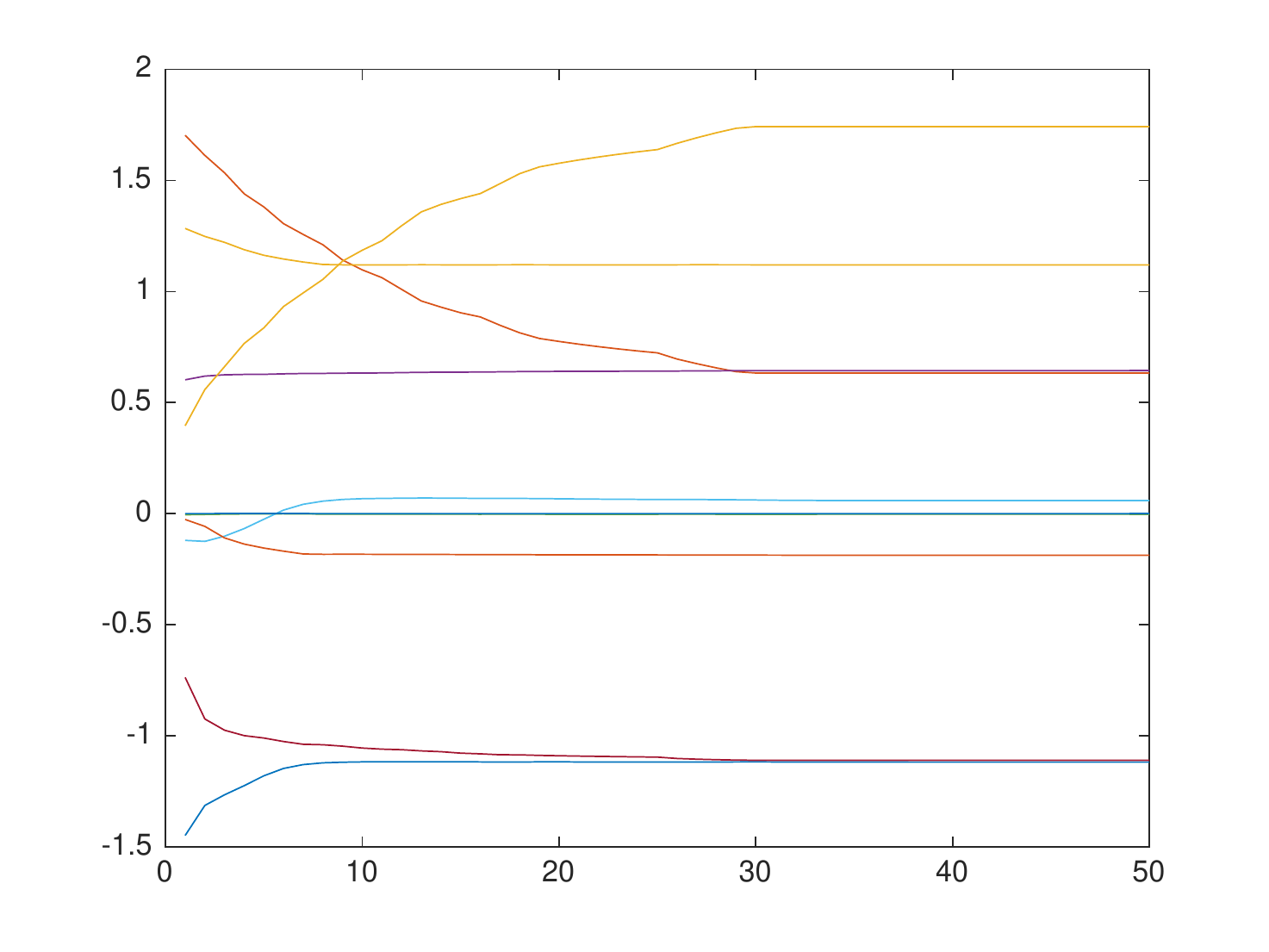}
        \caption{Trace of $E\theta_j$ in Variational Bayesian method (smooth case).}
    \end{subfigure}\quad
    \begin{subfigure}[t]{0.4\textwidth}
        \centering
        \includegraphics[width=\textwidth]{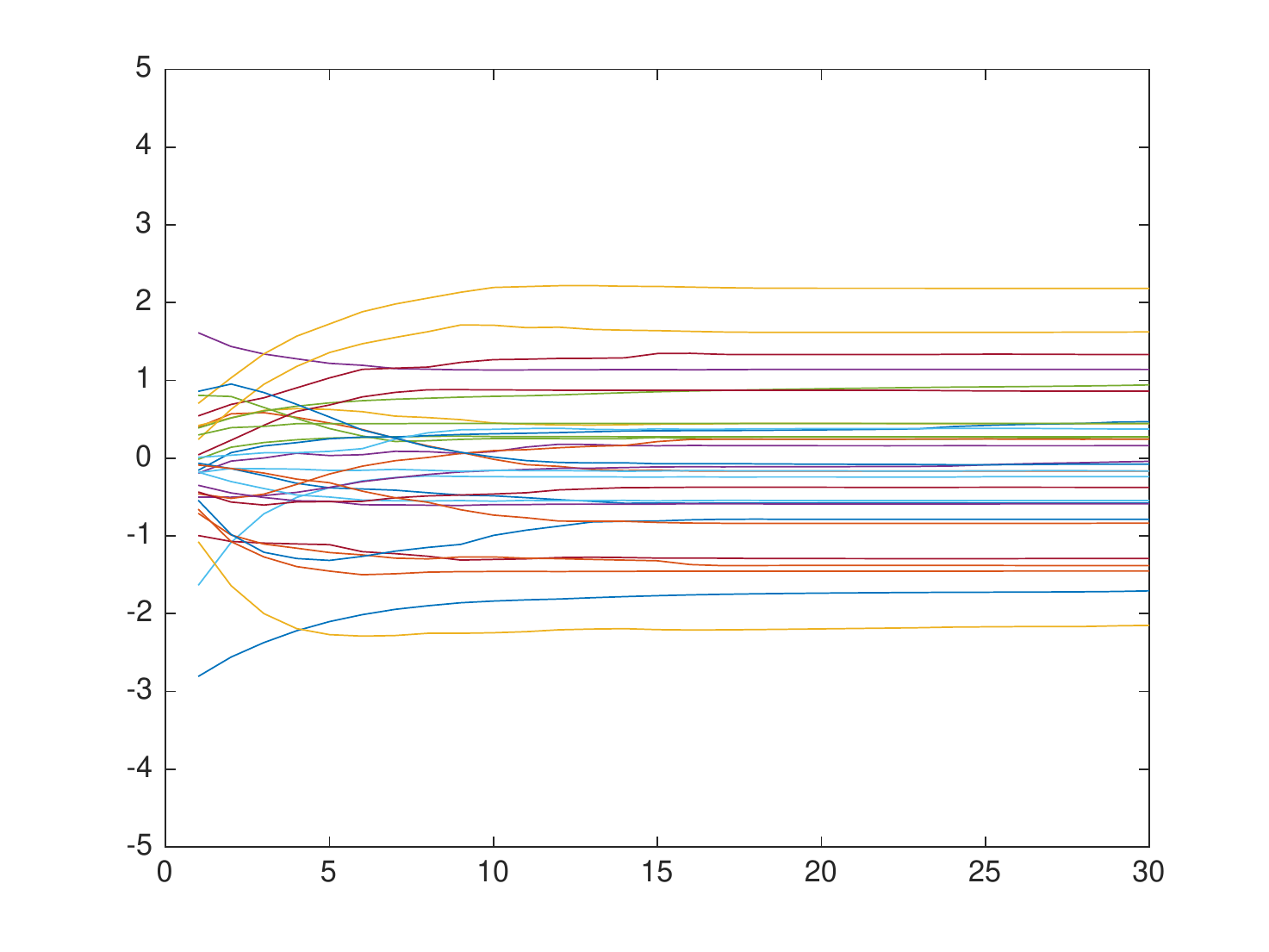}
        \caption{Trace of $E\theta_j$ in Variational Bayesian method (complicated case.}
    \end{subfigure}
    \quad
    \begin{subfigure}[t]{0.4\textwidth}
        \centering
        \includegraphics[width=\textwidth]{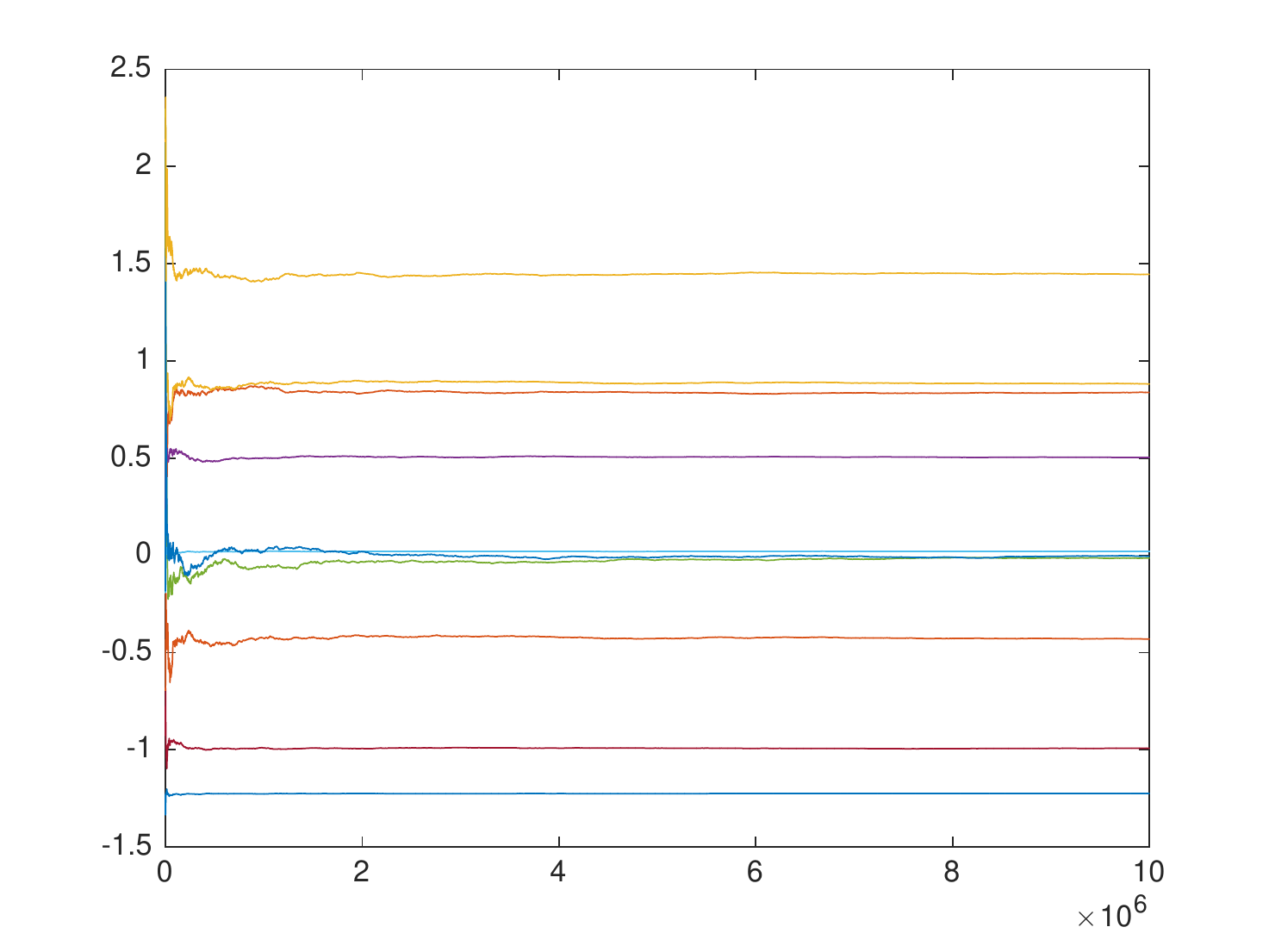}
        \caption{Trace of $E\theta_j$ in MCMC method (smooth case).}
    \end{subfigure}\quad
    \begin{subfigure}[t]{0.4\textwidth}
        \centering
        \includegraphics[width=\textwidth]{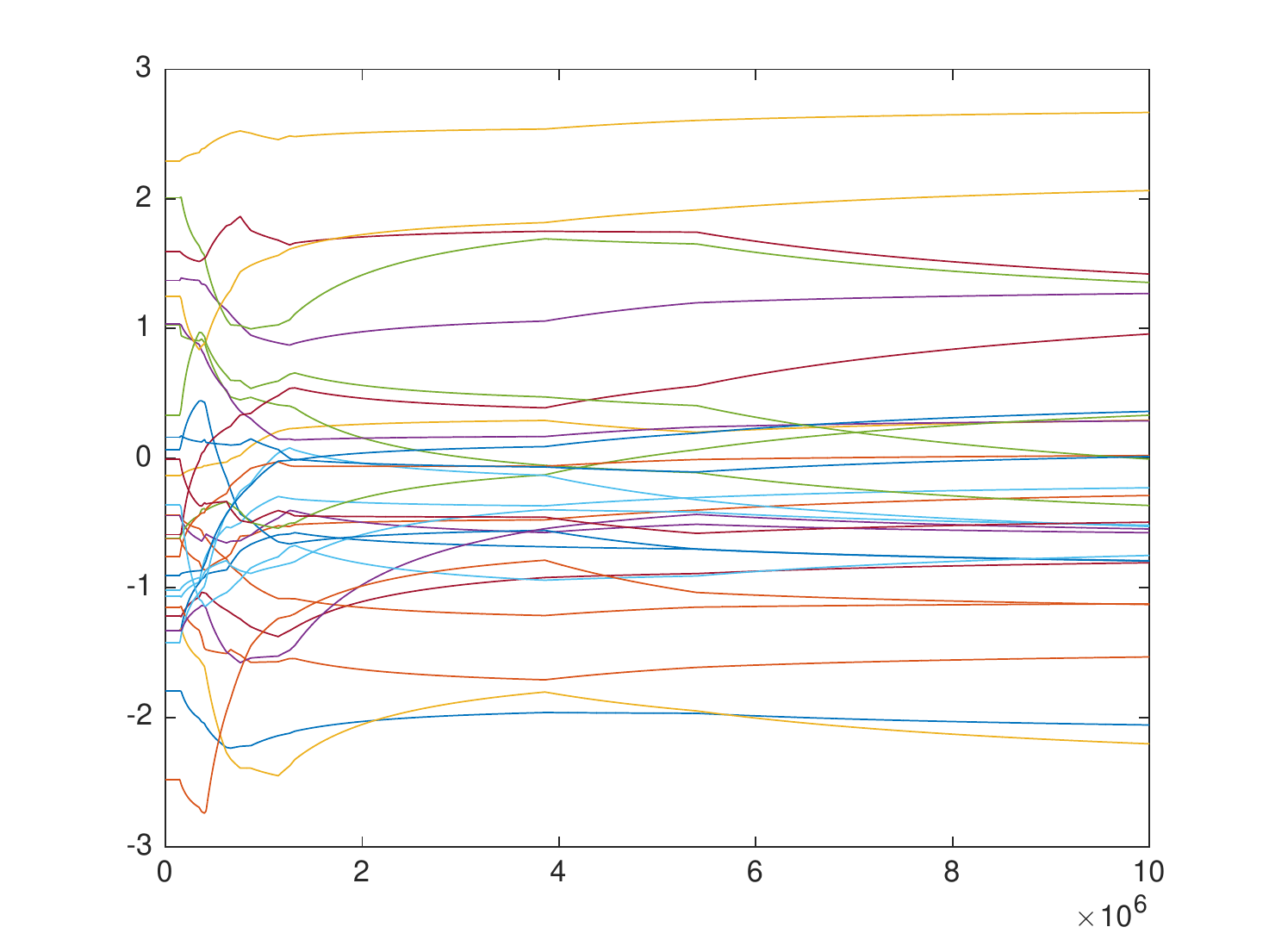}
        \caption{Trace of $E\theta_j$ in MCMC method (complicated case).}
    \end{subfigure}
    \caption{Trace of $E\theta_j$. Horizontal axis indicates the number of iterations, and vertical axis indicates the value of $E\theta_j$'s.}
    \label{fig:trace}
\end{figure}

\begin{figure}
    \centering
    \begin{subfigure}[t]{0.4\textwidth}
        \centering
        \includegraphics[width=\textwidth]{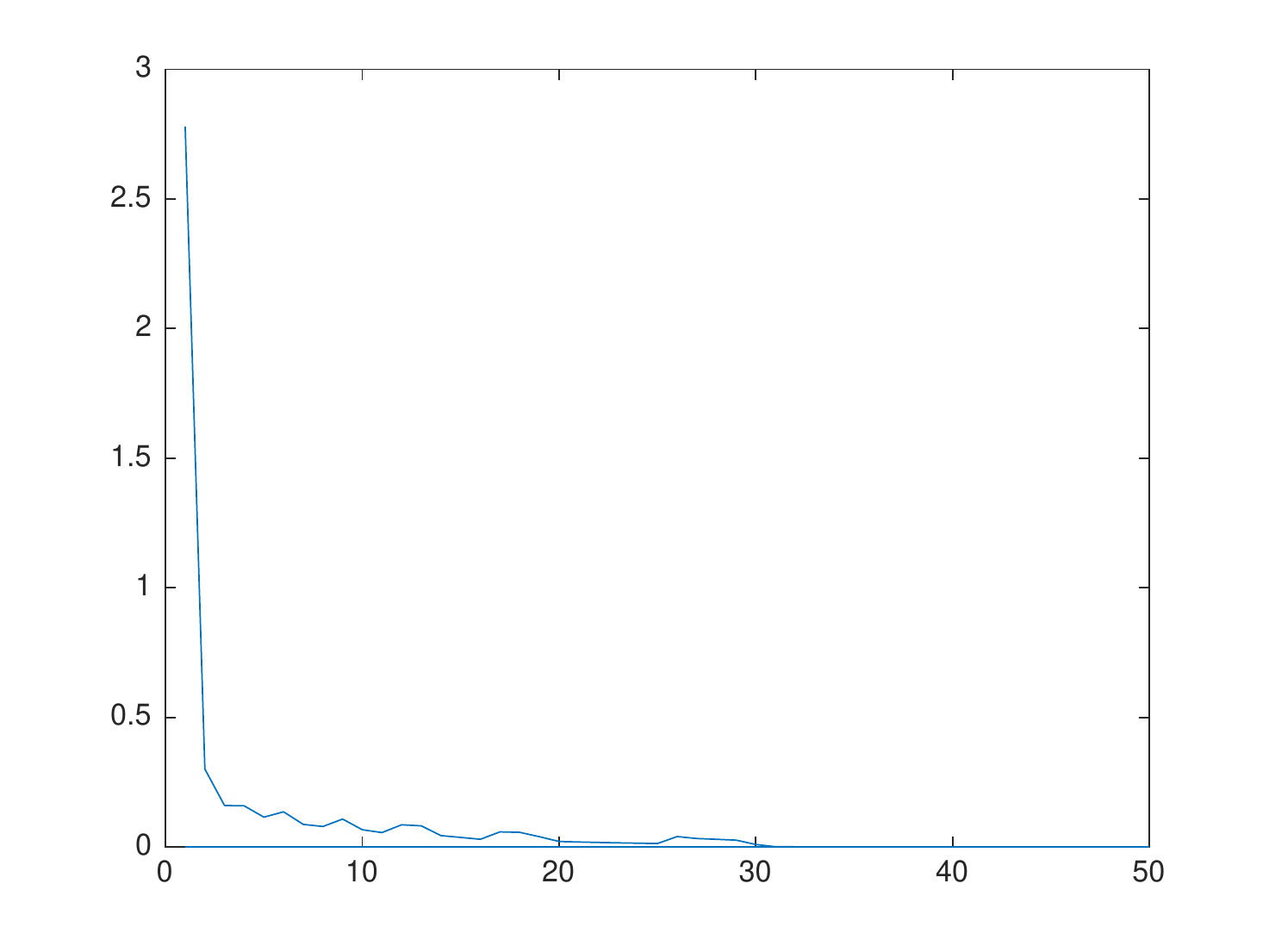}
        \caption{Plot of $\delta\mu$ (smooth case).}
    \end{subfigure}\quad
    \begin{subfigure}[t]{0.4\textwidth}
        \centering
        \includegraphics[width=\textwidth]{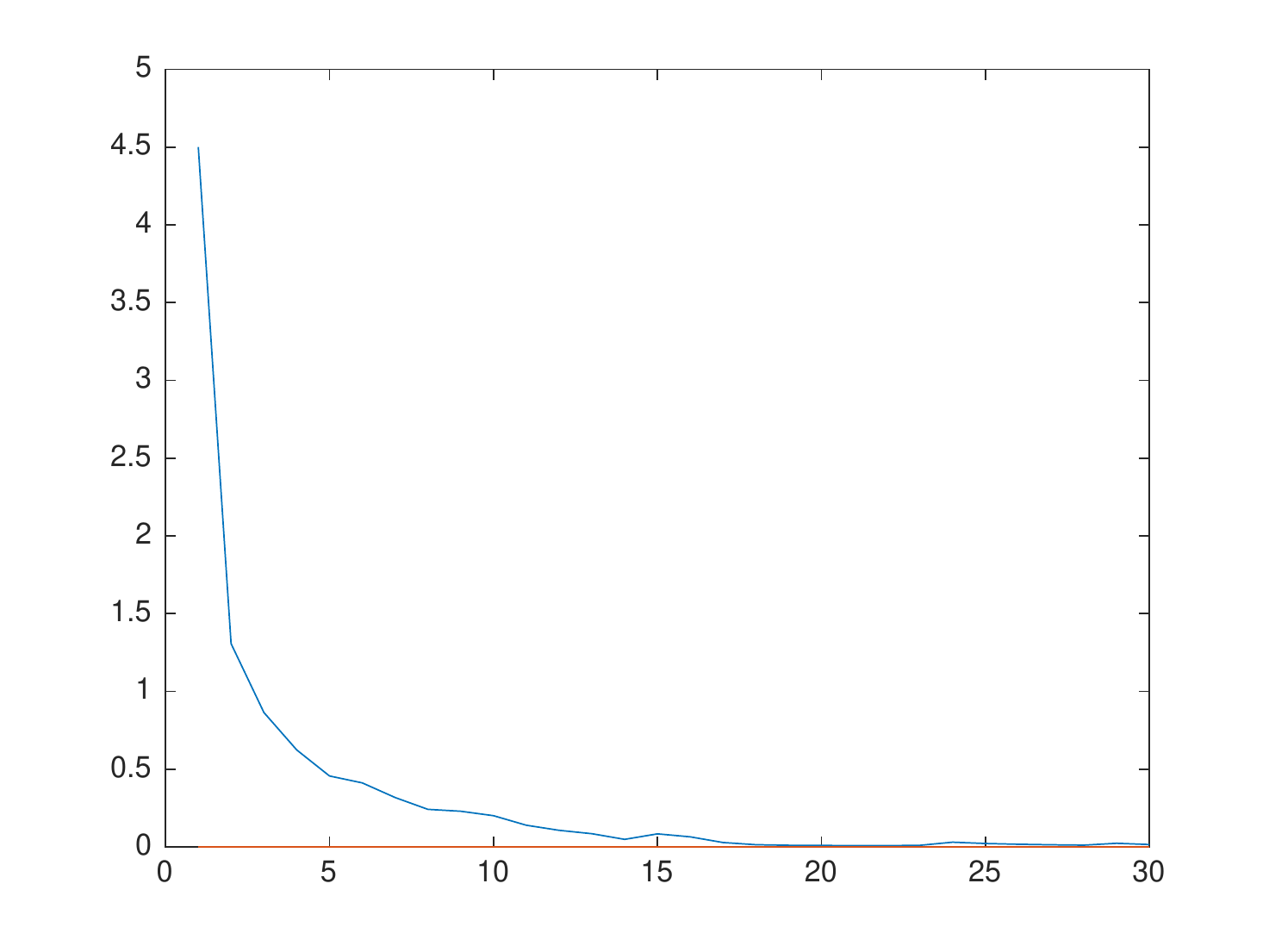}
        \caption{Plot of $\delta\mu$ (complicated case).}
    \end{subfigure}
    \caption{Convergence Plot of Variational Bayesian Method. Horizontal axis indicates the number of iterations, and vertical axis indicates the value of $\delta\mu$.}
    \label{fig:VBcon}
\end{figure}

\begin{figure}
  \captionsetup[subfigure]{labelformat=empty}
  \foreach \i in {1,...,8} {%
    \centering
    \begin{subfigure}{0.2\textwidth}
    \includegraphics[width=2cm,height=2cm]{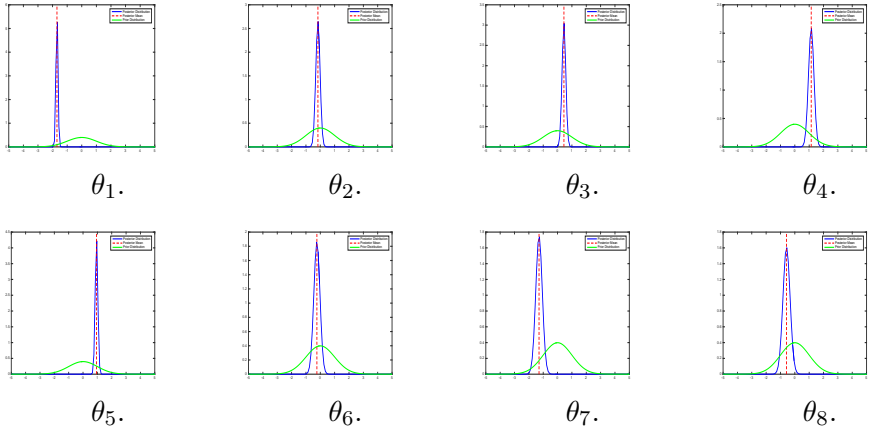}
    \caption{$\theta_{\i}$.}
    \end{subfigure}\quad
    \vspace{0.1cm}
  }
  \caption{Posterior and prior distribution of first 8 terms of $\theta_j$.}
  \label{fig:theta_distribution}
\end{figure}

\begin{table}
\centering
\begin{tabular}{|c|cc|}
  \hline
 Time(sec)  & smooth case & complicated case \\
\hline
  single iteration in VB method & 4.12 & 93.32 \\
  single iteration in MCMC method& 0.0001 & 0.0025 \\
  total time of VB till converge  & 123.6 & 1866 \\
  total time of MCMC ($10^7$ steps) & 978 & 24316 \\
\hline
\end{tabular}
\caption{Run time comparison between Variational Bayesian method and MCMC method.}
\label{rt1}
\end{table}

\FloatBarrier

\begin{appendices}
\section*{Appendix}

\section{Variational Approximation}

\subsection{Calculation for $\theta_k$}
The expectation and variance of $\theta_k$ namely $ E(\theta_k)$ and $ \sigma^2(\theta_k) $ are
\begin{align}
  E(\theta_k) &= \frac{\int_{\Omega_k} \theta_k q^*(\theta_k) \ d\theta_k}
  {\int_{\Omega_k}q^*(\theta_k) \ d\theta_k} \nonumber \\
  \sigma^2(\theta_k) &= \frac{\int_{\Omega_k} \left(\theta_k-E(\theta_k)\right)^2 q^*(\theta_k) \ d\theta_k}
  {\int_{\Omega_k}q^*(\theta_k) \ d\theta_k} \label{vtheta}
\end{align}
After we have the expectation and variance for all $\theta_k$, based on equation \eqref{KLE}, let $\phi_k = \sqrt{\lambda_k}\Phi_k$ we could get the expectation and variance of $\ln\kappa$ by
\begin{align}
  E(\ln\kappa) &= \sum_k E(\theta_k)\phi_k \nonumber \\
  Var(\ln\kappa) &= \sum_k \sigma^2(\theta_k)\phi_k^2 .\label{evkappa}
\end{align}

\subsection{Calculation for $a_{p,k}$}

From the variational formulation,

\begin{multline*}
  q(\theta_k,a_{p,k})= \frac{C}{\prod_i \sigma_{i,k}(\theta_k)}
        \cdot e^{-\frac{(\theta_k-\theta_0)^2}{2\sigma^2_0}-
        \frac{1}{2\sigma_y^2}\left(a_{p,k}^2E(R_p^2)v_p^\intercal v_p\right)}-c_p(\theta_k)
        \\
        \cdot \int_{R^{N_1-1}} e^{-A_{-p,k}^\intercal \Sigma A_{-p,k} + \beta^\intercal A_{-p,k} } \ dA_{-p,k}
\end{multline*}
and integrating we have

\begin{equation*}
   q^*(\theta_k,a_{p,k}) =
  \frac{1}{\sqrt{\det\Sigma} \cdot \prod_i \sigma_{i,k}(\theta_k)}
  e^{\left(\frac{1}{4}\beta^\intercal\Sigma^{-1}\beta-c_p(\theta_k)
  -\frac{(\theta_k-\theta_0)^2}{2\sigma^2_0}-
  \frac{1}{2\sigma_y^2}\left(a_{p,k}^2E(R_p^2)v_p^\intercal v_p\right)
  \right)}.
\end{equation*}

Then we could get the  marginal, expectation and variance of $a_{p,k}$ by numerically integrating over $\theta_k$.
\begin{align}
  E(a_{p,k}) &= \frac{\int_{R}\int_{\Omega_k} a_{p,k} q^*(\theta_k, a_{p,k})
  \ d\theta_k \ d a_{p,k}} {\int_{R}\int_{\Omega_k}q^*(\theta_k, a_{p,k}) \ d\theta_k \ d a_{p,k}}  \nonumber\\
  \sigma^2(a_{p,k}) &= \frac{\int_{R}\int_{\Omega_k} \left(a_{p,k}-E(a_{p,k})\right)^2 q^*(\theta_k, a_{p,k}) \ d\theta_k \ d a_{p,k}}
  {\int_{R}\int_{\Omega_k}q^*(\theta_k, a_{p,k}) \ d\theta_k \ d a_{p,k}} \label{vaij}.
\end{align}
We then introduce our calculation on $E(R_m),\ E(R_m^2)$ and $E(R_mR_n)$. Using our independence assumption between different $i$ and $j$, we have

\begin{align}
  E(R_m) &= E_{-k}(\prod_{j\neq k}a_{m,j}) &=& \prod_{j\neq k} E(a_{m,j}) \nonumber\\
  E(R_m^2) &= \ \ \prod_{j\neq k} E(a_{m,j}^2)
   &=& \prod_{j\neq k} \left[ E^2(a_{m,j}) + \sigma^2(a_{m,j})\right] \nonumber
   \\
  E(R_mR_n) &= \prod_{j\neq k} E(a_{m,j}a_{n,j}) &=&
        \prod_{j\neq k} \left[ E(a_{m,j})E(a_{n,j}) + \mathrm{Cov}(a_{m,j},a_{n,j}) \right]. \label{RMN}
\end{align}

\section{Proof of convergence}
First we give the proof for convergence in the forward model. Then we prove the posterior consistency result.
\subsection*{Forward Model}
We use the following proposition to prove the main result.
\begin{proposition}\label{prop2}
The solution $(a_n,v_n)$ of problem \eqref{eq:av} satisfies that for any functions $(a,v)\in L^2(\Omega_\mu)\times H^1_0(\Omega_x)$:
\begin{equation}\label{eq:EL}
  \int_\Omega \kappa\nabla(a_n\otimes v_n)\cdot\nabla(a_n\otimes v+a\otimes v_n)
  = \int_\Omega f_{n-1}(a_n\otimes v+a\otimes v_n)
\end{equation}
and
\begin{equation}\label{eq:ELun}
  \langle u_n,(a_n\otimes v+a\otimes v_n) \rangle = 0,
\end{equation}
as $u_n$ is defined by \eqref{eq:un}.
\end{proposition}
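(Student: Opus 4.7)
The plan is to view both identities as manifestations of the first-order optimality conditions for the variational problem \eqref{eq:av}, and then to deduce \eqref{eq:ELun} from \eqref{eq:EL} combined with the weak formulation of \eqref{eq:un}. Since $(a_n,v_n)$ minimizes the coercive quadratic-minus-linear functional
\[
J(a,v):=\int_\Omega \tfrac{1}{2}\kappa\,|\nabla(a\otimes v)|^2 - \int_\Omega f_{n-1}\,(a\otimes v),
\]
I would first compute the two partial Gateaux derivatives of $J$ at the minimizer. Fixing $v=v_n$ and perturbing $a_n$ by $\varepsilon \tilde{a}$ for arbitrary $\tilde{a}\in L^2(\Omega_\mu)$, the identity $\frac{d}{d\varepsilon}J(a_n+\varepsilon\tilde{a},v_n)\big|_{\varepsilon=0}=0$ yields
\[
\int_\Omega \kappa\,\nabla(a_n\otimes v_n)\cdot\nabla(\tilde{a}\otimes v_n) = \int_\Omega f_{n-1}\,(\tilde{a}\otimes v_n).
\]
Symmetrically, fixing $a=a_n$ and perturbing $v_n$ by $\varepsilon\tilde{v}\in H^1_0(\Omega_x)$ gives
\[
\int_\Omega \kappa\,\nabla(a_n\otimes v_n)\cdot\nabla(a_n\otimes \tilde{v}) = \int_\Omega f_{n-1}\,(a_n\otimes \tilde{v}).
\]
Choosing $\tilde{a}=a$ and $\tilde{v}=v$ in these two relations and adding them produces precisely \eqref{eq:EL}.

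For \eqref{eq:ELun}, I would invoke the recursive definition $u_n = u_{n-1}-a_n\otimes v_n$ together with the weak form of the elliptic equation satisfied by $u_{n-1}$. Since $-\nabla\cdot(\kappa\nabla u_{n-1})=f_{n-1}$ on $\Omega_x$ with homogeneous Dirichlet boundary conditions, testing against any $w=a_n\otimes v + a\otimes v_n\in\Gamma$ yields
\[
\int_\Omega \kappa\,\nabla u_{n-1}\cdot\nabla w = \int_\Omega f_{n-1}\,w.
\]
Subtracting \eqref{eq:EL} from this identity, the right-hand sides cancel and the left-hand side collapses, by linearity of the $\Gamma$-inner product, to $\langle u_{n-1}-a_n\otimes v_n,\,w\rangle = \langle u_n,w\rangle$, which is exactly \eqref{eq:ELun}.

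The main obstacle is essentially bookkeeping rather than deep analysis. One must verify that the tensor products $a\otimes v_n$ and $a_n\otimes v$ indeed lie in the space $\Gamma$ defined in \eqref{eq:Gamma} and hence qualify as admissible test functions, and that $J$ is Gateaux-differentiable in each factor with derivative given by the formal calculation above. Both facts follow at once from the uniform bounds $0<c_1\leq\kappa\leq c_2<\infty$ and the tensor-product structure of $\Gamma$. Beyond these routine verifications, the argument is a direct application of the Euler-Lagrange formalism together with the weak form of the elliptic equation defining $u_{n-1}$.
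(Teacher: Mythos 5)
Your proposal is correct and takes essentially the same approach as the paper, which (following \cite{le2009results}) obtains \eqref{eq:EL} as the first-order optimality condition of $J$ via the joint perturbation $(a_n+\varepsilon a)\otimes(v_n+\varepsilon v)$ --- whose first-order term in $\varepsilon$ is exactly your combined direction $a_n\otimes v+a\otimes v_n$ --- and then deduces \eqref{eq:ELun} by testing the weak form of the equation for $u_{n-1}$ against elements of $\Gamma$ and subtracting \eqref{eq:EL}, exactly as you do. The only cosmetic difference is that you derive the two partial Euler--Lagrange identities separately and add them, whereas the paper differentiates along the single joint perturbation; the two computations are equivalent at first order in $\varepsilon$.
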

\subsection*{Proof of Theorem \ref{conv1}}
Because $(a_n,v_n)$ satisfies \eqref{eq:av}, then from Proposition \ref{prop2}
\begin{align*}
  \|u_{n-1}\|^2   &= \|u_n\|^2 + \| a_n\otimes v_n\|^2 \\
   &\geq \|u_n\|^2
\end{align*}
Thus $\|u_{n-1}\|$ converges and $\sum_n\int_\Omega \kappa |\nabla (a_n\otimes v_n)|^2<\infty$, which implies that $\|a_n\otimes v_n\| = \int_\Omega \kappa |\nabla (a_n\otimes v_n)|^2\rightarrow 0$ while $n\rightarrow \infty$. Furthermore,
{
\begin{align*}
  E_n &= \frac{1}{2}\int_\Omega\kappa|\nabla (a_n\otimes v_n)|^2-\int_{\Omega}f_{n-1}\ a_n\otimes v_n \\
   &= \frac{1}{2}\int_\Omega\kappa|\nabla (a_n\otimes v_n)|^2-\int_{\Omega}\kappa \nabla u_{n-1} \cdot \nabla (a_n\otimes v_n) \\
   &=  \frac{1}{2}\int_\Omega\kappa|\nabla (a_n\otimes v_n)|^2-\int_{\Omega}\kappa \nabla (u_n + a_n\otimes v_n) \cdot \nabla (a_n\otimes v_n)\\
   &=  -\langle u_n, \ a_n\otimes v_n \rangle -\frac{1}{2}\int_\Omega \kappa |\nabla (a_n\otimes v_n)|^2 \\
   &= -\frac{1}{2}\int_\Omega \kappa |\nabla (a_n\otimes v_n)|^2.
\end{align*}}
Therefore $\lim_{n\rightarrow\infty}E_n=0$.\\
Because $\|u_n\|$ is bounded, then up to the extraction of a subsequence we could assume that $u_n$ converges weakly to $u_{\infty}$ in $\Gamma$. Since $(a_n,v_n)$ is the minimizer of problem \eqref{eq:av}, for any $n$ and $(a,v)\in L^2(\Omega_\mu)\times H^1_0(\Omega_x)$,
{
\begin{equation*}
  \int_{\Omega}\frac{1}{2}\kappa|\nabla (a\otimes v)|^2-\int_{\Omega}\kappa \nabla u_n \cdot \nabla a\otimes v \geq E_n.
\end{equation*}
}
By taking $n\rightarrow\infty$ and combine $\lim_{n\rightarrow\infty}E_n=0$ we have
{
\begin{equation*}
  \int_{\Omega}\frac{1}{2}\kappa|\nabla (a\otimes v)|^2-\int_{\Omega}\kappa \nabla u_\infty \cdot \nabla a\otimes v \geq 0.
\end{equation*}
}

\subsection*{Proof of Proposition \ref{prop2}}

Use techniques similar to Proposition 2 from \cite{le2009results}, for any $(a,v)\in L^2(\Omega_\mu)\times H^1_0(\Omega_x)$ and $\forall \varepsilon\in \mathbb{R}$, we have
\begin{multline*}
  \int_{\Omega}\frac{1}{2}\kappa|\nabla (a_n+\varepsilon a)\otimes (v_n+\varepsilon v)|^2-\int_{\Omega}f_{n-1} (a_n+\varepsilon a)\otimes (v_n+\varepsilon v) \\
  \geq \int_{\Omega}\frac{1}{2}\kappa|\nabla (a_n\otimes v_n)|^2-\int_{\Omega}f_{n-1} (a_n\otimes v_n).
\end{multline*}

Then using similar argument as in  \cite{le2009results} the result follows.

Use techniques similar to Proposition 2 from \cite{le2009results}, for any $(a,v)\in L^2(\Omega_\mu)\times H^1_0(\Omega_x)$ and $\forall \varepsilon\in \mathbb{R}$, we have
\begin{multline*}
  \int_{\Omega}\frac{1}{2}\kappa|\nabla (a_n+\varepsilon a)\otimes (v_n+\varepsilon v)|^2-\int_{\Omega}f_{n-1} (a_n+\varepsilon a)\otimes (v_n+\varepsilon v) \\
  \geq \int_{\Omega}\frac{1}{2}\kappa|\nabla (a_n\otimes v_n)|^2-\int_{\Omega}f_{n-1} (a_n\otimes v_n).
\end{multline*}

Then using similar argument as in  \cite{le2009results} the result follows.

\section{Proof of Theorem \ref{consistency1}}
We will use a general  technique used in \cite{ghosh2003bayesian}
(Chapter 4.4) and, also in \cite{ghosal2000convergence}.
For our case, we start with defining the following quantities.

Let $Z_i=(y_i,x_i)$ be the observed value at point $x_i$ and $x_i \sim H$. Let $u^*=\sum_{i=1}^{N_1}\prod_{j=1}^{N_2} {a}_{i,j}^*v_i(x)$ be the true mean, where $a_{i,j}^*$ are the true values of the coefficients. Also, we assume $|v_i(x)|<C_0$ for $x \in \Omega$. We let $D(p,q)$ be the Kullback-Leibler {\it KL} and $H^2(p,q)$ be the Hellinger distance between two densities $p$ and $q$. Let $f^*(Z)=f^*(y|x)h(x)$ be the true data generating density, where $f^*(y|x)$ denotes the density corresponding to true mean $u^*(x,\kappa^*)$.

Let $\psi$ be the prior parameters $\{ a_{i,j}(\theta_j) \}_{i,j}$ and $\Pi(\psi)$ be the prior distribution.
Define
 $v_\epsilon=\{\psi:\int( \sqrt{f(Z)}-\sqrt {f^*(Z)})^2 dxdy<\epsilon \}$.
Then
\begin{eqnarray}
\Pi(v_\epsilon^c|data)&=&\Phi_M\Pi(v_\epsilon^c \cap \mathscr{K}|data)+(1-\Phi_M)\Pi(v_\epsilon^c \cap \mathscr{K}|data)+\Pi(v_\epsilon^c \cap \mathscr{K}^c|data) \nonumber\\
&\leq &\Phi_M\Pi(v_\epsilon^c \cap \mathscr{K}|data)+\frac {(1-\Phi_M)\int_{v_\epsilon^c \cap \mathscr{K}}\prod_{i=1}^M \frac{f(Z_i|\psi )}{f^*(Z_i|\psi )}\Pi(\psi) d\psi}{\int_{K_{\epsilon_2}}\prod_{i=1}^M \frac{f(Z_i|\psi )}{f^*(Z_i|\psi )}\Pi(\psi)d\psi} \nonumber \\
&  &+\frac {\int_{v_\epsilon^c \cap \mathscr{K}^c}\prod_{i=1}^M \frac{f(Z_i|\psi )}{f^*(Z_i|\psi )}\Pi(\psi)d\psi}{\int_{K_{\epsilon_2}}\prod_{i=1}^M\frac{f(Z_i|\psi )}{f^*(Z_i|\psi )}\Pi(\psi)d\psi} \nonumber \\
&=&\Phi_M\Pi(v_\epsilon^c \cap \mathscr{K}|data)+(1-\Phi_M)\frac{{\bf I_{1n}}}{{\bf I_{1d}}}+\frac{{\bf I_{2n}}}{{\bf I_{2d}}},
\label{post}
\end{eqnarray}
where $K_{\epsilon_2}$ is some $\epsilon_2$ {\it KL} neighborhood around $f^*$ and $\mathscr{K}$ is a compact set.
Here, $\mathscr{K}$ is a compact sieve, the prior probability of
$\mathscr{K}^c$
decreases exponentially with increasing $M$. Here, $\Phi_M$ is a test function, which we introduce later.

Next we derive the set $\mathscr{K}$ and cover $\mathscr{K}$ with 'relatively small' number of Hellinger balls and as the result holds for each of the balls, combining them gives us the proof. The last part is done by constructing exponentially powerful test statistics between two non-intersecting Hellinger balls (\cite{ghosh2003bayesian}-Chapter 4.4.1).

\noindent \underline{\it Derivation of the set $\mathscr{K}$}

We use a well known result for Gaussian process (GP). Given that $a_{i,j}( )$'s are supported in a compact subset of  the real line $ \mathscr{R}$, for each of the GP path, we have $P(\sup |a_{i,j}|>d_M) \leq e^{-\alpha d_M^2}; \alpha>0$; see \cite{tokdar2007posterior}. Choosing $d_M=\sqrt{M}$, we have $P(\sup_{i,j} |a_{i,j}|>d_M) \leq e^{-\beta M}$,
for some $\beta>0$. Hence, $\mathscr{K}=\{ |a_{i,j}|\leq d_M \}_{i,j}$ and $\Pi(\mathscr{K}^c) \leq e^{-\beta M}$.

\noindent \underline{\it Covering number  of  $\mathscr{K}$}

On $\mathscr{K}$, $|a_i(\mu)|\leq M^{.5N_2}$, we can have at most $\mathscr{N}(\epsilon',\mathscr{K})=$ $(\frac{C_0N_1 2M^{.5N_2}}{\beta_1\epsilon'})^{N_1}$ many grids of $a_{i,j}$'s, such that  for any point in  $\mathscr{K}$ with corresponding mean function value $u(x)$, we have a grid point  $\psi$ and corresponding value $\hat{u}$ such that $\sup |u-\hat{u}| <\beta_1\epsilon'$, $\beta_1>0$.  Hence,  choosing $\beta_1$ appropriately,  we have $D(f(u),f(\hat{u}))<\epsilon'^2$ and $H^2(f(u),f(\hat{u}))<\epsilon'^2$ (as {\it KL} distance dominates Hellinger distance).
Therefore, we have the log Hellinger covering number of $\mathscr{K}$, \[log(\mathscr{N}(\epsilon'^2,\mathscr{K},H))=o(M).\]

\noindent \underline{\it Sufficient prior mass around $f^*(Z)$}

Let, $A_{c,\delta}=\{\psi: sup_{i,j}|a_{i,j}-a_{i,j}^*|<c\sqrt{\delta}\}$ and then
Also,  $\Pi(A_{c,\delta})>0$ for any $c,\delta>0$. We can choose  $c$ small enough such that {\it KL} distance between $f^*(Z)$ and $f(Z,\psi)$ is less than $\delta$ for $\psi \in A_{c,\delta}$.   Hence $\Pi(K_\delta)>0$ for any $\delta$ {\it KL} neighborhood of $f^*(Z)$.

\noindent \underline{\it Combining all the parts}

Let, $4\epsilon_2<\text{min}\{\beta \epsilon,\epsilon\}$ and $\epsilon'^2=\epsilon_2>\delta$, then from \eqref{post}  as $M \rightarrow \infty$, $e^{M\epsilon_2}{\bf I_{2d}}>1$   and $e^{M\epsilon_2}{\bf I_{2n}}\rightarrow 0$ with probability 1 (see \cite{tokdar2007posterior}). Hence, $\frac{{\bf I_{2n}}}{{\bf I_{2d}}} \rightarrow 0$ with probability 1.

Let $H(\epsilon)$ be the Hellinger  ball around $f^*(Z)$ with distance $\sqrt{\epsilon}$; (i.e $H^2 \leq \epsilon$).
 For, ${\bf I_1}=\frac{{\bf I_{1n}}}{{\bf I_{1d}}}$, we can show there exists test  between $f^*$ and $H(\epsilon)^c \cap \mathscr{K}$ (\cite{tokdar2007posterior};\cite{ghosal2000convergence}--Theorem 2.1 proof, \cite{ghosal2000convergence}--Section 7: existence of tests)
 $\Phi_M$, such that \[E_{f^*}(\Phi_M)\leq e^{-\alpha_1M} \text{ and  }sup_{\mathscr{K}\cap H({\epsilon})^c}E_{f(Z,\psi)}(1-\Phi_M)\leq e^{-\alpha_1M}\]
 where $\alpha_1=.5\epsilon$. We use this $\Phi_M$ in equation \eqref{post}.

 From equation \eqref{post}, \[E_{f^*}(\Phi_M\Pi(v_\epsilon^c \cap \mathscr{K}|data))\leq E_{f^*}(\Phi_M)\leq e^{-\alpha_1M}.\]
 Then using Markov inequality and Borel-Cantelli lemma (\cite{ghosh2003bayesian}--chapter 4.4) $\Phi_M\Pi(v_\epsilon^c \cap \mathscr{K}|data)$ converges to zero almost surely.

 Also, $e^{M\epsilon_2}{\bf I_{1d}}>1$ with probability one for large $M$ and $E_f^*(1-\Phi_M) \frac{{\bf I_{1n}}}{{\bf I_{1d}}}\leq \text{sup}E_{f\in \mathscr{K}\cap H(\epsilon)^c}(1-\Phi_M)e^{M\epsilon_2}<e^{-M\epsilon_2}$. Hence, following the argument of Markov inequality and Borel-Cantelli lemma (see \cite{ghosh2003bayesian}),  $\frac{{\bf I_{1n}}}{{\bf I_{1d}}}\rightarrow 0$ with probability one, as $M\rightarrow \infty$.

 As a result, we have $\Pi(v_\epsilon|data) \rightarrow 1$, with probability 1 as
the number of observations $M$
goes to infinity. 

\end{appendices}
\section*{Acknowledgements}
We would like to thank the partial support from NSF 1620318, the U.S. Department of Energy Office of Science, Office of Advanced Scientific Computing Research,  Applied Mathematics Program under award number DE-FG02-13ER26165 and 
National Priorities Research Program grant NPRP grant 7-1482-1278 from the Qatar National Research Fund.
\section*{References}
\bibliography{bibfile}

\begin{thebibliography}{10}

\bibitem{efendiev2013generalized}
Yalchin Efendiev, Juan Galvis, and Thomas~Y Hou.
\newblock Generalized multiscale finite element methods.
\newblock {\em Journal of Computational Physics}, 251:116--135, 2013.

\bibitem{kaipio2006statistical}
Jari Kaipio and Erkki Somersalo.
\newblock {\em Statistical and computational inverse problems}, volume 160.
\newblock Springer Science \& Business Media, 2006.

\bibitem{kaipio2007statistical}
Jari Kaipio and Erkki Somersalo.
\newblock Statistical inverse problems: discretization, model reduction and
  inverse crimes.
\newblock {\em Journal of computational and applied mathematics},
  198(2):493--504, 2007.

\bibitem{tarantola2005inverse}
Albert Tarantola.
\newblock {\em Inverse problem theory and methods for model parameter
  estimation}.
\newblock siam, 2005.

\bibitem{calvetti2007introduction}
Daniela Calvetti and Erkki Somersalo.
\newblock {\em An Introduction to Bayesian Scientific Computing: Ten Lectures
  on Subjective Computing}, volume~2.
\newblock Springer Science \& Business Media, 2007.

\bibitem{stuart2010inverse}
Andrew~M Stuart.
\newblock Inverse problems: a bayesian perspective.
\newblock {\em Acta Numerica}, 19:451--559, 2010.

\bibitem{mondal2010bayesian}
A~Mondal, Y~Efendiev, B~Mallick, and A~Datta-Gupta.
\newblock Bayesian uncertainty quantification for flows in heterogeneous porous
  media using reversible jump markov chain monte carlo methods.
\newblock {\em Advances in Water Resources}, 33(3):241--256, 2010.

\bibitem{xun2013parameter}
Xiaolei Xun, Jiguo Cao, Bani Mallick, Arnab Maity, and Raymond~J Carroll.
\newblock Parameter estimation of partial differential equation models.
\newblock {\em Journal of the American Statistical Association},
  108(503):1009--1020, 2013.

\bibitem{mondal2014bayesian}
Anirban Mondal, Bani Mallick, Yalchin Efendiev, and Akhil Datta-Gupta.
\newblock Bayesian uncertainty quantification for subsurface inversion using a
  multiscale hierarchical model.
\newblock {\em Technometrics}, 56(3):381--392, 2014.

\bibitem{iglesias2014well}
Marco~A Iglesias, Kui Lin, and Andrew~M Stuart.
\newblock Well-posed bayesian geometric inverse problems arising in subsurface
  flow.
\newblock {\em inverse problems}, 30(11):114001, 2014.

\bibitem{ammar2006new}
Amine Ammar, B~Mokdad, Francisco Chinesta, and Roland Keunings.
\newblock A new family of solvers for some classes of multidimensional partial
  differential equations encountered in kinetic theory modeling of complex
  fluids.
\newblock {\em Journal of Non-Newtonian Fluid Mechanics}, 139(3):153--176,
  2006.

\bibitem{ammar2007new}
Amine Ammar, B~Mokdad, Francisco Chinesta, and Roland Keunings.
\newblock A new family of solvers for some classes of multidimensional partial
  differential equations encountered in kinetic theory modelling of complex
  fluids: Part ii: Transient simulation using space-time separated
  representations.
\newblock {\em Journal of Non-Newtonian Fluid Mechanics}, 144(2):98--121, 2007.

\bibitem{le2009results}
Claude Le~Bris, Tony Lelievre, and Yvon Maday.
\newblock Results and questions on a nonlinear approximation approach for
  solving high-dimensional partial differential equations.
\newblock {\em Constructive Approximation}, 30(3):621--651, 2009.

\bibitem{signorini2016proper}
Marianna Signorini, Sergio Zlotnik, and Pedro Diez.
\newblock Proper generalized decomposition solution of the parameterized
  helmholtz problem: application to inverse geophysical problems.
\newblock {\em International Journal for Numerical Methods in Engineering},
  2016.

\bibitem{berger2016proper}
Julien Berger, Helcio~RB Orlande, and Nathan Mendes.
\newblock Proper generalized decomposition model reduction in the bayesian
  framework for solving inverse heat transfer problems.
\newblock {\em Inverse Problems in Science and Engineering}, pages 1--19, 2016.

\bibitem{gao2016application}
Longfei Gao, Xiaosi Tan, and Eric~T Chung.
\newblock Application of the generalized multiscale finite element method in
  parameter-dependent pde simulations with a variable-separation technique.
\newblock {\em Journal of Computational and Applied Mathematics}, 300:183--191,
  2016.

\bibitem{beal2003variational}
Matthew~James Beal.
\newblock {\em Variational algorithms for approximate Bayesian inference}.
\newblock University of London London, 2003.

\bibitem{jin2010hierarchical}
Bangti Jin and Jun Zou.
\newblock Hierarchical bayesian inference for ill-posed problems via
  variational method.
\newblock {\em Journal of Computational Physics}, 229(19):7317--7343, 2010.

\bibitem{guha2015variational}
Nilabja Guha, Xiaoqing Wu, Yalchin Efendiev, Bangti Jin, and Bani~K Mallick.
\newblock A variational bayesian approach for inverse problems with skew-t
  error distributions.
\newblock {\em Journal of Computational Physics}, 301:377--393, 2015.

\bibitem{chung2016adaptive}
Eric Chung, Yalchin Efendiev, and Thomas~Y Hou.
\newblock Adaptive multiscale model reduction with generalized multiscale
  finite element methods.
\newblock {\em Journal of Computational Physics}, 320:69--95, 2016.

\bibitem{efendiev2014multilevel}
Yalchin Efendiev, Bangti Jin, Michael Presho, and Xiaosi Tan.
\newblock Multilevel markov chain monte carlo method for high-contrast
  single-phase flow problems.
\newblock {\em arXiv preprint arXiv:1402.5068}, 2014.

\bibitem{efendiev2015multiscale}
Yalchin Efendiev and Michael Presho.
\newblock Multiscale model reduction with generalized multiscale finite element
  methods in geomathematics.
\newblock {\em Handbook of Geomathematics}, pages 679--701, 2015.

\bibitem{wei2012reduced}
Jia Wei.
\newblock {\em Reduced Order Model and Uncertainty Quantification for
  Stochastic Porous Media Flows}.
\newblock PhD thesis, Texas A\&M University, 2012.

\bibitem{ghosh2003bayesian}
Jayanta~K Ghosh and RV~Ramamoorthi.
\newblock Bayesian nonparametrics.
\newblock {\em Springer Series in Statistics. Springer-Verlag, New York},
  16:37, 2003.

\bibitem{ghosal2000convergence}
Subhashis Ghosal, Jayanta~K Ghosh, and Aad~W Van Der~Vaart.
\newblock Convergence rates of posterior distributions.
\newblock {\em Annals of Statistics}, pages 500--531, 2000.

\bibitem{tokdar2007posterior}
Surya~T Tokdar and Jayanta~K Ghosh.
\newblock Posterior consistency of logistic gaussian process priors in density
  estimation.
\newblock {\em Journal of Statistical Planning and Inference}, 137(1):34--42,
  2007.

\end{thebibliography}

\end{document}